\newtheorem{obs}[theorem]{Observation}
\newcommand{\oset}[2]{%
  {\mathop{#2}\limits^{\vbox to -.9\ex@{\kern-\tw@\ex@
   \hbox{\scriptsize #1}\vss}}}}
\begin{document}
\title{\large $d$-COS-R is FPT via Interval Deletion}
\author{N.S. Narayanaswamy \and R. Subashini}
\institute{Department of Computer Science and Engineering\\ Indian Institute of Technology Madras, India.\\ 
\email{\{swamy,rsuba\}@cse.iitm.ac.in}}
\maketitle
\begin{abstract}
\noindent 
A binary matrix $M$ has the Consecutive Ones Property (COP) if there exists a permutation of columns that arranges the ones consecutively in all the rows. Given a matrix, the $d$-COS-R problem is to determine if there exists a set of at most $d$ rows whose deletion results in a matrix with COP. We consider the parameterized complexity of this problem with respect to the number $d$ of rows to be deleted as the parameter. The closely related Interval Deletion problem has recently shown to be FPT \cite{CM12}. In this work, we describe a recursive depth-bounded search tree algorithm in which the problems at the leaf-level are solved as instances of Interval Deletion.  The running time of the algorithm is dominated by the running time of Interval Deletion, and therefore we show that $d$-COS-R is fixed-parameter tractable and has a run-time of  $O^*(10^d)$. 
\end{abstract}
\section{Introduction}
\label{sec:intro}
\par Testing  COP for binary matrices is a classical algorithmic problem.  COP testing has applications in physical mapping of DNA \cite{ABH98} and in recognizing interval graphs, planar graphs and Hamiltonian cubic graphs \cite{BL76,WLZ07}. There are many linear-time algorithms known in the literature for COP testing \cite{BL76,HMPV00,Hsu02,Hsu_Mc,MPT98,TM05}.  There are many combinatorial properties of matrices with COP.  They are known to be totally unimodular, and there are results connecting matrices with COP and Intersection Cardinality Preserving Interval assignments \cite{FG65,NS09}. Further, the classical NP-hard problems, integer linear programming (ILP) and set cover, are polynomial-time solvable, when the associated binary matrix has COP \cite{MD08}. 
\noindent
In this paper our focus is on matrices that do not have COP, and we address the natural optimization problem to find a minimum set of rows whose deletion results in a submatrix with COP. The corresponding decision problem, referred to as $d$-COS-R, is known to be NP-complete \cite{GJ79} and is well-studied in the parameterized complexity framework \cite{DGN10}. A parameterized problem is said to be fixed-parameter tractable (FPT) with respect to $d$ as the parameter if there is an algorithm with run-time $O^*(f(d))$, where $f$ is a computable function depending only on $d$ \footnote{$O^*$ notation ignores polynomial terms.}. For details on parameterized complexity, we refer the reader to \cite{DowneyF99,Nie2006}. In this paper, we consider the parameterized complexity of $d$-COS-R defined as follows: \\

\vspace{-.25cm}
\noindent \fbox{
\parbox{16cm}{
\hspace{7cm}{\bf $d$-COS-R}   \\                         
\textit{\textbf{Instance:}} $(M,d)$ - A binary matrix $M_{m \times n}$ and an integer $d \ge 1$. \\
\textit{\textbf{Parameter:}} $d$ \\
\textit{\textbf{Question:}} Does there exist a set of at most $d$ rows of $M$ whose deletion results in a matrix with COP? 
}
} \\
\noindent
The problems of deleting a minimum number of rows or columns to transform a given matrix into a matrix with COP are called Min-COS-R and Min-COS-C, respectively.  These two problems are known to be NP-hard even on very sparse matrices, containing only two 1-entries per row and at most three 1-entries per column \cite{TZ07}.  These minimization and the corresponding maximization versions have been studied \cite{DGN10}.  Min-COS-R and Min-COS-C are fixed-parameter tractable on matrices that have only two ones either per row or per column.   In this work we focus only on the decision version of Min-COS-R which is the $d$-COS-R problem.  On restricted classes of matrices, $d$-COS-R is known to be FPT \cite{DGN10}. 
These FPT algorithms are based on a refinement of the forbidden submatrix characterization of matrices with COP \cite{AT72}.   To the best of our knowledge, the parameterized complexity of $d$-COS-R on general binary matrices is still open. In this work, we show that $d$-COS-R admits an algorithm with run-time $O^*(10^d)$.  Our result is obtained by a recursive branching algorithm in which the leaf instances are that of Interval-Deletion (defined below).   Then, we employ the recent $O^*(10^d)$ algorithm for Interval Deletion \cite{CM12} to solve $d$-COS-R. Thus, we answer the natural open question on the parameterized complexity of $d$-COS-R by showing that it is FPT on all binary matrices.  This is a significant advancement over the current knowledge on this problem, where current FPT results \cite{DGN10} are known only when there are bounds on the number of 1s in the rows or columns.  

\noindent \fbox{
\parbox{16cm}{
\hspace{7cm}{\bf Interval Deletion}   \\                         
\textit{\textbf{Instance:}} $(G,d)$ - A graph $G$ and an integer $d \ge 1$. \\
\textit{\textbf{Parameter:}} $d$ \\
\textit{\textbf{Question:}} Does $G$ have a set $V'$ of at most $d$ vertices such that $G \setminus V'$ is an interval graph? 
}
} \\ \\
\noindent 
{\bf Our Approach:} A natural approach towards obtaining submatrices with COP is to identify the known classes of forbidden configurations \cite{AT72}, and to remove them by eliminating appropriate rows.  While this is the broad approach in \cite{DGN10}, we look at the well known fact that a graph is an interval graph if and only if its {\em clique matrix} (formally defined later) has COP \cite{FG65,Gol04}. We consider the question of how to convert a given 0-1 matrix into the clique matrix of some graph, and then attempt an interval deletion on that graph. From \cite{Gol04}, a natural graph that can be associated with a binary matrix is a {\em derived graph}.  Informally, the columns of the matrix correspond to cliques in the derived graph. However, a derived graph may have many other {\em spurious} cliques, and these cliques are the hinderances towards getting a clique matrix. Our first branching rule motivated by the Helly property, that must be satisfied by any set of intervals, ensures that these spurious cliques are localized to the derived graph associated with a pair of columns in the given matrix.  We then design a second branching rule, based on induced 4-cycles,  to ensure that the number of these spurious cliques is a polynomial in the input size, and they can be enumerated in polynomial time.  Then with a third set of branching rules we eliminate these spurious cliques, the result being a matrix in which the maximal cliques of the derived graph are associated with some column of the matrix.  We then consider an {\em augmented matrix} which becomes the clique matrix of a graph.  We then show that Interval Deletion on this graph ensures that the augmented matrix has COP, which directly gives a submatrix with COP for the given matrix.  All these branching rules, along with the recent FPT algorithm \cite{CM12} for Interval Deletion are shown to solve the $d$-COS-R problem in FPT time.

\section{COP, Intervals, and Clique-Matrices}  \label{sec:climat}
In this section, we present the necessary structural results to describe our algorithm and the proofs of correctnesses.   Some of the lemmas are cited from the appropriate papers, and some are proved by us.   Graph theoretic definitions and notations are as per \cite{WDB01,Gol04}.

\noindent  Throughout this paper we consider only binary matrices. For an $m \times n$ matrix $M$, let ${\cal R}(M)=\{r_1,\ldots,r_m\}$ and ${\cal C}(M)=\{c_1,\ldots,c_n\}$ denote the sets of rows and columns, respectively. The $(i,j)^{th}$ entry in $M$ is denoted as $M_{ij}$. For a subset ${\cal D} \subseteq {\cal R}(M)$ of rows, the submatrix induced on $\cal D$ and ${\cal R}(M) \setminus {\cal D}$ are denoted by $M[{\cal D}]$ and $M \setminus {\cal D}$, respectively. 

\noindent
The {\em derived graph} associated with a 0-1 matrix $M$, defined in \cite{Gol04}, is $G(M)=(V,E)$  is defined as $V=\{v_i \mid r_i \in {\cal R}(M)\}$ and $E=\{\{v_i,v_j\} \mid \exists c_k\in{\cal C}(M), ~ M_{ik}=M_{jk}=1\}$.  In other words, $G(M)$ is obtained from $M$ by visualizing each column as a clique involving the vertices (corresponding to rows)  which have a 1 entry in that column.  For a column $c_k$ in $M$, the support of $c_k$, denoted by $supp(c_k)$, is defined as the set $\{r_i \in {\cal R}(M) \mid M_{ik}=1\}$. Also, for $c_k$, the set of vertices in $G(M)$ corresponding to the rows in $supp(c_k)$ is defined as $vert(c_k)=\{v_i \mid r_i \in supp(c_k)\}$. 
\subsection{Matrices with COP, Interval Assignments, and Interval Graphs} 
A graph is called an interval graph if its vertices can be assigned intervals such that two vertices are adjacent if and only if their corresponding intervals have nonempty intersection. Let $G$ be a graph on the vertex set $\{v_1,\cdots,v_n\}$ and let $\{Q_1,\cdots,Q_l\}$ be the set of maximal cliques in $G$. The {\em clique matrix} $M$ of $G$ is the matrix whose rows and columns correspond to the vertices and the maximal cliques, respectively, in $G$. The entry $M_{ij}=1$ if the vertex $v_i$ is in the clique $Q_j$ and it is 0 otherwise. 
The following characterization relates COP and interval graphs.
\begin{theorem}
\label{FG}
{\em \cite{FG65}} A graph is an {\em interval graph} if and only if its clique matrix has COP.
\end{theorem}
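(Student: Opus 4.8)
The plan is to prove the two directions separately. The reverse implication is almost immediate from the definitions, while the forward implication rests on the Helly property of intervals. Suppose first that $G$ is an interval graph and fix an interval representation $v_i \mapsto I_i$ on the real line, so that $v_iv_j \in E$ iff $I_i \cap I_j \neq \emptyset$. Each maximal clique $Q_j$ is a family of pairwise intersecting intervals, so by Helly's property for intervals there is a point $p_j \in \bigcap_{v_i \in Q_j} I_i$. I would first record two facts. (i) Distinct maximal cliques receive distinct points: if $p_j = p_k$, then the set of all vertices whose interval contains this common point is a clique containing both $Q_j$ and $Q_k$, forcing $Q_j = Q_k$ by maximality. (ii) More generally, $v_i \in Q_j$ if and only if $p_j \in I_i$; one direction is the choice of $p_j$, and the converse uses that $p_j \in I_i$ makes $v_i$ adjacent to every vertex of $Q_j$, so maximality places $v_i$ in $Q_j$. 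Now order the columns of the clique matrix according to $p_{j_1} < p_{j_2} < \cdots < p_{j_l}$. For a fixed vertex $v_i$ with $I_i = [a_i, b_i]$, the cliques containing $v_i$ are exactly those $Q_j$ with $p_j \in [a_i, b_i]$ by (ii), and these form a contiguous block in the chosen order. Hence this column permutation witnesses the COP of the clique matrix.

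For the reverse direction, suppose the maximal cliques can be ordered $Q_1, \dots, Q_l$ so that for every vertex $v_i$ the index set $\{\, j : v_i \in Q_j \,\}$ is an interval $[\ell_i, r_i]$ of consecutive integers. Assign to $v_i$ the interval $I_i = [\ell_i, r_i]$. If $v_iv_j \in E$, then $\{v_i, v_j\}$ extends to some maximal clique $Q_k$, so $k \in [\ell_i, r_i] \cap [\ell_j, r_j]$ and $I_i \cap I_j \neq \emptyset$. Conversely, if some $k \in I_i \cap I_j$, then $v_i, v_j \in Q_k$, and since $Q_k$ is a clique, $v_iv_j \in E$. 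Thus $G$ is an interval graph, using only that every edge of $G$ lies in some maximal clique.

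The one delicate point is the forward direction, specifically the step showing that the chosen clique points are distinct and that membership of $v_i$ in $Q_j$ is governed precisely by whether $I_i$ contains $p_j$; both follow by combining the maximality of the cliques with Helly's property, and once these are in hand the consecutiveness of the ones in each row is immediate. Everything else is routine bookkeeping.
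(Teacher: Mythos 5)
Your proof is correct, but note that the paper does not prove Theorem~\ref{FG} at all: it is quoted as a known result from \cite{FG65} and used as a black box, so there is no in-paper argument to match against. What you have written is essentially the classical Fulkerson--Gross/Gilmore--Hoffman proof, and both directions check out: the forward direction correctly exploits the Helly property of real intervals to pick a representative point $p_j$ in each maximal clique, and your observations (i) and (ii) are exactly what is needed to see that sorting the cliques by these points makes each row's ones consecutive; the reverse direction correctly turns a consecutive ordering of the maximal cliques into an interval model (the only implicit step is that two intersecting intervals with integer endpoints share an integer index $k$, which is immediate since the intersection is again an interval with integer endpoints). It is worth noting how your argument relates to the machinery the paper does develop: your reverse direction is essentially the same construction as the paper's proof of Lemma~\ref{cop-ig} (assign to each row the span of columns containing its ones), while your forward direction's use of Helly points plays the role that the ICPIA framework (Theorem~\ref{NS} and Lemma~\ref{lem-helly}) plays in the paper --- the paper routes intersection-cardinality information through ICPIA, whereas you only need the existence of a common point per maximal clique, which is a lighter and more self-contained route for this particular theorem.
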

\begin{theorem}
\label{char-int}
{\em \cite{Gol04}} A graph $G$ is an {\em interval graph} if and only if $G$ has no induced cycle of length 4 and $\overline{G}$ is a comparability graph.
\end{theorem}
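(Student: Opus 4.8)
The plan is to prove the two implications separately --- this is essentially the Gilmore--Hoffman characterization --- routing the harder (``if'') direction through Theorem~\ref{FG}, so that it suffices to produce a clique matrix with the consecutive-ones property.

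For the ``only if'' direction I would fix an interval representation $\{I_v\}_{v\in V(G)}$, perturbed so that all endpoints are distinct. That $G$ has no induced $C_4$: were $\{a,b,c,d\}$ to induce a $4$-cycle with edges $ab,bc,cd,da$, then $I_a$ and $I_c$ would be disjoint, say with $I_a$ to the left; since $I_b$ and $I_d$ each meet both $I_a$ and $I_c$, each of them contains the whole gap $[\max I_a,\min I_c]$, so $I_b\cap I_d\neq\emptyset$, contradicting the non-edge $bd$. That $\overline{G}$ is a comparability graph: put $u<v$ iff $I_u$ lies entirely to the left of $I_v$; this is a strict partial order whose comparable pairs are exactly the pairs of disjoint intervals, i.e.\ exactly the edges of $\overline{G}$, so $\overline{G}$ is the comparability graph of $<$.

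For the ``if'' direction, assume $G$ has no induced $C_4$ and fix a transitive orientation $F$ of $\overline{G}$; writing $u<_P v$ for $(u,v)\in F$, the relation $<_P$ is a strict partial order on $V(G)$ in which two vertices are comparable exactly when they are non-adjacent in $G$, so the cliques of $G$ are precisely the antichains of $P$. I would then order the maximal cliques $\mathcal{Q}=\{Q_1,\dots,Q_m\}$ of $G$ by declaring $Q\prec Q'$ whenever some $a\in Q\setminus Q'$ and $b\in Q'\setminus Q$ satisfy $a<_P b$, and prove (i) that $\prec$ is a strict linear order on $\mathcal{Q}$, and (ii) that for every vertex $v$ the set $\{Q\in\mathcal{Q}:v\in Q\}$ is an interval of $(\mathcal{Q},\prec)$. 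Granting (i)--(ii), sorting the columns of the clique matrix according to $\prec$ makes every row's ones consecutive, so by Theorem~\ref{FG} the graph $G$ is an interval graph. Totality in (i) is immediate from the maximality of cliques; and for (ii), if $v\in Q\cap Q''$ with $Q\prec Q'\prec Q''$ but $v\notin Q'$, then maximality of $Q'$ yields a vertex $w\in Q'$ non-adjacent to $v$, and the pair $\{v,w\}$ then witnesses $Q'\prec Q$ or $Q''\prec Q'$, contradicting antisymmetry; hence $v\in Q'$.

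The step I expect to be the main obstacle is establishing that $\prec$ is antisymmetric and transitive. For antisymmetry, suppose $Q\prec Q'$ is witnessed by a pair $(a,b)$ and $Q'\prec Q$ by a pair $(c,d)$; using that $a,d$ lie in the common clique $Q$ and $b,c$ in the common clique $Q'$ (so these pairs are incomparable in $P$), a short chase of comparabilities forces $ac$ and $bd$ to be edges of $G$, whereupon $\{a,b,c,d\}$ induces a $C_4$ (its non-edges being $ab$ and $cd$), contradicting the hypothesis. Transitivity is a similar but longer case analysis that again bottoms out either in a forbidden $C_4$ or in a comparable pair sitting inside a common clique. This is precisely where ``no induced $C_4$'' is indispensable: $C_4$ itself fails to be an interval graph although its complement $2K_2$ is trivially a comparability graph, so neither hypothesis can be omitted. (Alternatively, one could first deduce that $G$ is chordal --- a chordless $k$-cycle of $G$ with $k\geq 5$ would exhibit $\overline{C_k}$ as an induced, hence comparability, subgraph of $\overline{G}$, whereas a forcing argument on its edges shows $\overline{C_k}$ admits no transitive orientation --- and then invoke the chordal form of Gilmore--Hoffman, though that route needs essentially the same clique-ordering argument.)
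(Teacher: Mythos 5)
The paper does not prove Theorem~\ref{char-int} at all: it is imported verbatim from the cited reference \cite{Gol04}, so there is no in-paper argument to compare against. Your proposal is essentially the classical Gilmore--Hoffman proof as it appears in that reference, and it is correct in outline. The ``only if'' direction is complete as written, and your reduction of the ``if'' direction to properties (i) and (ii) of the clique order $\prec$, followed by an appeal to Theorem~\ref{FG}, is exactly the standard route; your antisymmetry argument (forcing $ac,bd\in E(G)$ and exhibiting the induced $C_4$ on $\{a,c,b,d\}$) and your consecutiveness argument (ii) both check out, including the small unstated point that the witness $w$ lies outside $Q$ and $Q''$ because it is non-adjacent to $v$. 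The one piece you leave as a sketch, transitivity, is indeed where the remaining work sits, and it does close along the lines you anticipate: since $\prec$ is total and antisymmetric it suffices to rule out a directed triangle $A\prec B\prec C\prec A$ with witnesses $a_1<_P b_1$, $b_2<_P c_2$, $c_3<_P a_3$; each ``cross pair'' such as $a_1c_2$, $a_3b_2$, $b_1c_3$, $a_1c_3$, $a_3b_1$ must be an edge of $G$, because orienting it either way and composing with the given $F$-arcs produces, via transitivity of $F$, either a comparable pair inside a common clique or a witness contradicting antisymmetry; then $\{a_1,a_3,b_1,c_3\}$ (all distinct, again by transitivity of $F$) induces a $C_4$ with non-edges $a_1b_1$ and $a_3c_3$. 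So the plan is sound and completable; writing out that case analysis is all that separates the proposal from a full proof.
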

Here we set up the framework to argue the correctness of our branching rules.
An $m \times n$ matrix $M$ can be represented as a set system ($U,{\cal S}(M)$) with ${\cal S}(M)$= $\{S_1,\ldots,S_m\}$ being a collection of subsets of $U=\{1,\ldots,n\}$ where $S_i=\{j \mid M_{ij}=1\}$. A family of subsets is said to have the {\em Helly property} if every subfamily of it, formed by pairwise intersecting subsets, contains a common element \cite{DPS09}. An interval $J$, denoted by $[i,k]$, is the ordered set of consecutive integers from $i$ to $k$. An interval assignment ${\cal I}$ to a set system ($U,{\cal S}$) is an assignment of an interval $I_i$ to each $S_i \in {\cal S}$. An Intersection Cardinality Preserving Interval Assignment (ICPIA) to ${\cal S}$ is an interval assignment ${\cal I}$ that satisfies $|S_{i} \cap S_{j}|=|I_{i} \cap I_{j}|$ for every pair $S_i$ and $S_j$ of elements in ${\cal S}$.  A main property of the ICPIA, shown in \cite{FG65,NS09} is that for any collection of sets $\{S_{i_1}, \ldots, S_{i_r}\}$, $\displaystyle |\bigcap_{j=1}^r S_{i_j}| =  |\bigcap_{j=1}^r I_{i_j}|$.  
\begin{theorem}
\label{NS}
{\em \cite{FG65,NS09}} A matrix $M$ has COP if and only if ${\cal S}(M)$ has an ICPIA.  Further, if $\cal I$ is an ICPIA for ${\cal S}(M)$, then 
for any collection of sets $\{S_{i_1}, \ldots, S_{i_r}\} \subseteq {\cal S}(M)$, $\displaystyle |\bigcap_{j=1}^r S_{i_j}| =  |\bigcap_{j=1}^r I_{i_j}|$.  
\end{theorem}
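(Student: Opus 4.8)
I would prove the two directions separately, and read the multiway intersection equality off the construction used for the harder one.

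($\Rightarrow$) Suppose $M$ has COP and fix a permutation $\pi$ of its columns (a bijection of $U=\{1,\dots,n\}$) after which every row occupies a block of consecutive columns; set $I_i:=\pi(S_i)$, which is then an interval. Since $\pi$ is injective, $\pi(\bigcap_{j=1}^r S_{i_j}) = \bigcap_{j=1}^r \pi(S_{i_j}) = \bigcap_{j=1}^r I_{i_j}$, and hence $|\bigcap_{j=1}^r S_{i_j}| = |\bigcap_{j=1}^r I_{i_j}|$ for every subcollection; taking one or two indices shows that $\{I_i\}$ is an ICPIA, and the general equality is precisely the ``Further'' clause.

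($\Leftarrow$) Given an ICPIA $\{I_i\}$, the plan is to build a column permutation $\pi$ with $\pi(S_i)=I_i$ for all $i$: this yields COP at once, and the intersection equality then follows exactly as above. The crux (Step A) is to promote the pairwise equalities to $|\bigcap_{i\in A}S_i| = |\bigcap_{i\in A}I_i|$ for every $A\subseteq\{1,\dots,m\}$. Writing $I_i=[l_i,r_i]$ and choosing $b\in A$ with $l_b$ largest and $a\in A$ with $r_a$ smallest, interval geometry gives $\bigcap_{i\in A}I_i = I_a\cap I_b$ together with $I_a\cap I_b\subseteq I_j$ for all $j\in A$. Since $\bigcap_{i\in A}S_i\subseteq S_a\cap S_b$ trivially while $|S_a\cap S_b|=|I_a\cap I_b|$, it therefore suffices to prove $S_a\cap S_b\subseteq S_j$ for every $j\in A$, and I would reduce this to the single triple $\{a,b,j\}$. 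For a triple there are essentially two cases: either one of $I_a,I_b$ is nested inside $I_j$ --- and then the matching pair of cardinalities forces the corresponding $S$-containment --- or $I_a\cap I_j$ and $I_b\cap I_j$ together cover $I_j$ with no room to spare, which via the pairwise equalities forces $|S_a\cap S_j|+|S_b\cap S_j|-|S_a\cap S_b| = |S_j|$; combined with $|(S_a\cup S_b)\cap S_j|\le|S_j|$ this pins down $|S_a\cap S_b\cap S_j| = |S_a\cap S_b|$, i.e.\ $S_a\cap S_b\subseteq S_j$. (The degenerate possibilities $I_a\cap I_b=\emptyset$ and $a=b$ are immediate.)

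Granting Step A, the rest is bookkeeping; we may assume $M$ has no all-zero column, since such columns affect neither COP nor the existence of an ICPIA. Assign to each column $j$ its \emph{profile} $C_j=\{i: j\in S_i\}$ and to each position $p$ occurring in some interval its \emph{trace} $T_p=\{i: p\in I_i\}$. By inclusion--exclusion over the subset lattice, Step A forces, for every nonempty $A\subseteq\{1,\dots,m\}$, the number of columns of profile exactly $A$ to equal the number of occurring positions of trace exactly $A$; hence the multisets $\{C_j\}_j$ and $\{T_p\}_p$ coincide, and in particular there are exactly $n$ occurring positions. Any bijection $\pi$ from columns to positions that matches profiles to traces, after renumbering the occurring positions in increasing order as $1,\dots,n$, then satisfies $j\in S_i\iff i\in C_j = T_{\pi(j)}\iff \pi(j)\in I_i$, so $\pi(S_i)=I_i$ is an interval for every $i$; thus $M$ has COP and, as in the forward direction, the intersection equality holds.

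I expect Step A to be the only real obstacle. The tempting estimate $|S_a\cap S_b\cap S_j|\ge |S_a\cap S_j|+|S_b\cap S_j|-|S_j|$ is too weak on its own; the content is to see that the interval structure --- in essence the Helly property on the line --- removes exactly that slack. That some such input is genuinely needed is visible already on the $3\times 3$ matrix whose rows are $\{1,2\}$, $\{2,3\}$, $\{1,3\}$: it has no COP, and correspondingly no ICPIA, since three length-two intervals cannot pairwise meet in exactly one point.
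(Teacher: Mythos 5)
The paper never proves this statement: Theorem~\ref{NS} is imported verbatim from \cite{FG65,NS09} and used as a black box, so there is no in-paper argument to compare against and your proposal has to stand on its own. It does. The forward direction is the standard observation that an injective column permutation commutes with intersections. For the converse, your Step~A is sound: picking $a$ with $r_a$ minimal and $b$ with $l_b$ maximal over $A$ gives $\bigcap_{i\in A}I_i=I_a\cap I_b\subseteq I_j$ for every $j\in A$, and the two-case triple analysis is correct --- nestedness of $I_a$ or $I_b$ in $I_j$ forces the corresponding set containment via equal cardinalities, and otherwise $I_a\cap I_j$ and $I_b\cap I_j$ cover $I_j$ exactly, so $|S_a\cap S_j|+|S_b\cap S_j|-|S_a\cap S_b|=|S_j|$, which together with $|(S_a\cup S_b)\cap S_j|\le |S_j|$ forces $S_a\cap S_b\subseteq S_j$. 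This already yields the ``Further'' clause for an \emph{arbitrary} ICPIA, which is exactly what the theorem asserts. The profile/trace bookkeeping is also fine: every position inside some $I_i$ has nonempty trace, the equalities $|\bigcap_{i\in A}S_i|=|\bigcap_{i\in A}I_i|$ are precisely the upward sums over the subset lattice, and M\"obius inversion (downward induction from $A=\{1,\dots,m\}$) gives equality of the exact profile and trace counts, so a profile-to-trace bijection exists and, after renumbering occurring positions, realizes COP. Two remarks. First, you silently use $|S_i|=|I_i|$ (the pair $i=j$) both in the nested case and when replacing $|I_j|$ by $|S_j|$; the paper's phrase ``every pair'' should be read as including $i=j$ (and the $r=1$ case of the ``Further'' clause shows the definition must include it), but you should state this explicitly. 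Second, your route differs from the cited treatments, which establish intersection preservation for triples and then build the COP ordering by an inductive processing of the set system; your extremal-endpoint reduction plus the inclusion--exclusion matching of column profiles with position traces is a more direct, purely counting-based construction, and it is self-contained, which the paper's citation-only treatment is not.
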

We prove a key lemma that is necessary for the first rule in our branching algorithm.
\begin{lemma} 
\label{lem-helly}
If $M$ has COP then ${\cal S}(M)$ satisfies the Helly Property.  Further, for every triple of pairwise intersecting sets in ${\cal S}(M)$, one of the sets is contained in the union of the other two.
\end{lemma}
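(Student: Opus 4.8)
The plan is to route everything through the ICPIA guaranteed by Theorem~\ref{NS}. I would fix an ICPIA $\mathcal{I}=\{I_1,\dots,I_m\}$ for $\mathcal{S}(M)$; Theorem~\ref{NS} then gives $\bigl|\bigcap_{j} S_{i_j}\bigr|=\bigl|\bigcap_{j} I_{i_j}\bigr|$ for every subcollection of indices, in particular $|S_i\cap S_j|=|I_i\cap I_j|$ for all $i,j$ and $|S_i|=|I_i|$ for each $i$. The only geometric input I need is the one-dimensional Helly property of intervals: a nonempty family of pairwise intersecting intervals has a common point (the largest left endpoint does not exceed the smallest right endpoint, because the interval carrying the former and the one carrying the latter already overlap).

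For the Helly property of $\mathcal{S}(M)$, I would take a pairwise intersecting subfamily $\{S_{i_1},\dots,S_{i_r}\}$. Then $|I_{i_j}\cap I_{i_k}|=|S_{i_j}\cap S_{i_k}|\ge 1$ for all $j,k$, so the intervals $I_{i_1},\dots,I_{i_r}$ are pairwise intersecting; one-dimensional Helly gives $\bigl|\bigcap_{j} I_{i_j}\bigr|\ge 1$, whence $\bigl|\bigcap_{j} S_{i_j}\bigr|=\bigl|\bigcap_{j} I_{i_j}\bigr|\ge 1$, i.e.\ the subfamily has a common element.

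For the second claim, fix pairwise intersecting $S_a,S_b,S_c$; by the previous paragraph their intervals $I_a,I_b,I_c$ are pairwise intersecting as well. I would first establish the elementary fact that among any three pairwise intersecting intervals on the line, one lies in the union of the other two: pick $I_p$ with smallest left endpoint and $I_q$ with largest right endpoint; if $p=q$ then $I_p$ contains both of the other intervals and we are done, while if $p\ne q$ and $r$ is the third index, then $I_p$ and $I_q$ overlap, so $I_p\cup I_q$ is a single interval running from the left endpoint of $I_p$ to the right endpoint of $I_q$, and both endpoints of $I_r$ fall in this range, giving $I_r\subseteq I_p\cup I_q$. Relabelling, assume $I_a\subseteq I_b\cup I_c$. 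Then by inclusion--exclusion together with the ICPIA identities for the pairs $\{a,b\},\{a,c\}$ and the triple $\{a,b,c\}$,
\begin{align*}
|S_a\cap(S_b\cup S_c)|
&=|S_a\cap S_b|+|S_a\cap S_c|-|S_a\cap S_b\cap S_c|\\
&=|I_a\cap I_b|+|I_a\cap I_c|-|I_a\cap I_b\cap I_c|\\
&=|I_a\cap(I_b\cup I_c)|=|I_a|=|S_a|.
\end{align*}
Since $S_a\cap(S_b\cup S_c)\subseteq S_a$ and these finite sets have the same cardinality, they coincide, so $S_a\subseteq S_b\cup S_c$.

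I expect the two points needing care to be the elementary three-interval claim (handling the degenerate case $p=q$, and verifying that $I_p\cup I_q$ is actually an interval) and the bookkeeping of applying the cardinality identities to exactly the indices singled out by the interval inclusion; everything else follows directly from Theorem~\ref{NS}.
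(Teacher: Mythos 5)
Your proposal is correct and follows essentially the same route as the paper: invoke the ICPIA guaranteed by Theorem~\ref{NS}, transfer pairwise intersections to the intervals, and use the one-dimensional Helly and union-of-two-intervals facts to pull the conclusions back to the sets. The only difference is that you spell out (via inclusion--exclusion and the cardinality identities) the final step $I_a\subseteq I_b\cup I_c \Rightarrow S_a\subseteq S_b\cup S_c$, which the paper asserts without detail, and you prove Helly for arbitrary pairwise intersecting subfamilies rather than just triples.
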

\begin{proof}
Since $M$ has COP, let $M'$ be the column permuted matrix obtained from $M$ which has consecutive ones in the rows.  For each $1 \leq i \leq m$, let $I_i$ be the natural interval assigned to $S_i$, obtained from $M'$.  Let ${\cal I} = \{I_1, \ldots, I_m\}$ be this interval assignment.   From Theorem \ref{NS}, $\cal I$ is an ICPIA for ${\cal S}(M)$.  Therefore, if there exists three sets $S_1, S_2, S_3$ that violate the Helly property- we first observe that for each pair of them, say $S_i$ and $S_j$, $|S_i \cap S_j| = |I_i \cap I_j| > 0$.  Since intervals satisfy the Helly property, it follows that the 3 intervals have a common point.  We now conclude that $0 = |S_1 \cap S_2 \cap S_3| = |I_1 \cap I_2 \cap I_3| > 0$.    The first equality comes from our hypothesis that the 3 sets violate Helly property, the second equality follows from Theorem \ref{NS}, and the third inequality follows from the fact that the 3 intervals share a common point, as
intervals respect Helly Property.  This is a contradiction to our premise that $S_1, S_2, S_3$ violate the Helly Property, which is now shown to be false.  To prove the second part of the lemma, let $S_1, S_2, S_3$ be pairwise intersection sets.  Then, we know that in
the corresponding intervals $I_1, I_2, I_3$, one of them is contained in the union of the other two, say $I_3$ is contained in $I_1 \cup I_2$.  Since $I$ is
an ICPIA, it follows that $S_3 \subseteq S_1 \cup S_2$.  Hence the lemma.\qed
\end{proof}
\subsection{Matrices with COP and Clique-Matrices of Derived Graphs}
For $M$, the $(n+m) \times n$ matrix $\oset{$\thicksim$}M$ is defined as $\left(\smallmatrix I  \\ M \endsmallmatrix \right)$ where $I$ is the $n \times n$ identity matrix.  The main reason for considering $\oset{$\thicksim$}M$ is that in $G(\oset{$\thicksim$}M)$, each column corresponds to a maximal clique.  This may not necessarily be the case in $G(M)$.   We first observe that $M$ and $\oset{$\thicksim$}M$  behave the same with respect to COP, and the proof of this observation is very easy based on the fact that $\oset{$\thicksim$}M$ is obtained from $M$ by {\em padding} an  identity matrix.
\begin{obs}
\label{COP}
$\oset{$\thicksim$}M$ has COP if and only if $M$ has COP. 
\end{obs}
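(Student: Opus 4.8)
The plan is to prove both directions directly from the definition of COP, exploiting the fact that the rows we pad on are rows of the identity matrix and hence contain at most one $1$.

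For the forward direction, suppose $\oset{$\thicksim$}M$ has COP. Since $M$ is exactly the submatrix of $\oset{$\thicksim$}M$ induced by its last $m$ rows, and deleting rows can only destroy (never create) a violation of consecutiveness, any column permutation that arranges the ones of $\oset{$\thicksim$}M$ consecutively in every row in particular does so for every row of $M$. Hence $M$ has COP. (Put differently, COP is closed under taking row-induced submatrices, and this is the only fact used here.)

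For the converse, suppose $M$ has COP, witnessed by a permutation $\pi$ of ${\cal C}(M)$ under which every row of $M$ has its ones consecutive. The columns of $\oset{$\thicksim$}M$ are in bijection with those of $M$, so apply the same permutation $\pi$ to the columns of $\oset{$\thicksim$}M$. The last $m$ rows of $\oset{$\thicksim$}M$ are precisely the rows of $M$, and so remain consecutive; each of the first $n$ rows is a row of the $n \times n$ identity matrix, hence contains exactly one $1$, and a row with at most one $1$ is vacuously in ``consecutive ones'' form for any column order. Thus $\pi$ arranges the ones consecutively in every row of $\oset{$\thicksim$}M$, so $\oset{$\thicksim$}M$ has COP.

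There is essentially no obstacle here; the only point needing even a remark is that a singleton row is trivially consecutive, which is exactly what makes padding by the identity harmless. An alternative, equally short route is via Theorem \ref{NS}: an ICPIA for ${\cal S}(M)$ extends to one for ${\cal S}(\oset{$\thicksim$}M)$ by assigning each padded singleton set a length-one interval disjoint as needed, and restriction of an ICPIA gives the other implication; but the direct permutation argument above suffices.
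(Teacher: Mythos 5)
Your proof is correct and follows exactly the route the paper has in mind: the paper gives no detailed argument, remarking only that the observation is easy because $\oset{$\thicksim$}M$ is obtained from $M$ by padding an identity matrix, and your write-up simply makes that explicit (identity rows have a single 1, hence are trivially consecutive under any column order; conversely COP is inherited by row-induced submatrices). Nothing is missing.
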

\begin{corollary}
\label{corr-cop}
Let ${\cal D} \subseteq {\cal R}(M)$. Then, $M \setminus {\cal D}$ has COP if and only if $\oset{$\thicksim$}M \setminus {\cal D}$ has COP.
\end{corollary}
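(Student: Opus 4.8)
The plan is to reduce the corollary to Observation~\ref{COP} applied to the matrix $M \setminus {\cal D}$ rather than to $M$ itself. The key point is a commutation fact: the two operations "delete the rows in ${\cal D}$" and "pad with the $n \times n$ identity matrix on top" essentially commute, up to the presence of some repeated identity rows that do not affect COP. Concretely, I would first observe that $\oset{$\thicksim$}{(M \setminus {\cal D})} = \left(\smallmatrix I \\ M \setminus {\cal D} \endsmallmatrix\right)$, whereas $\oset{$\thicksim$}{M} \setminus {\cal D}$ is the matrix $\left(\smallmatrix I \\ M \setminus {\cal D} \endsmallmatrix\right)$ as well, since the rows of ${\cal D}$ lie entirely within the $M$-block of $\oset{$\thicksim$}{M}$ and deleting them leaves the identity block untouched. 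So in fact $\oset{$\thicksim$}{M} \setminus {\cal D} = \oset{$\thicksim$}{(M \setminus {\cal D})}$ as matrices (the column sets coincide, namely ${\cal C}(M)$, and the identity block is unchanged by the deletion).

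Once that identification is in place, the corollary is immediate: by Observation~\ref{COP} applied to the matrix $M \setminus {\cal D}$ in place of $M$, we have that $\oset{$\thicksim$}{(M \setminus {\cal D})}$ has COP if and only if $M \setminus {\cal D}$ has COP. Combining this with the matrix equality $\oset{$\thicksim$}{M} \setminus {\cal D} = \oset{$\thicksim$}{(M \setminus {\cal D})}$ from the previous step yields that $\oset{$\thicksim$}{M} \setminus {\cal D}$ has COP if and only if $M \setminus {\cal D}$ has COP, which is exactly the statement of the corollary.

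The only subtlety, and the one place I would be slightly careful, is the claim that $\oset{$\thicksim$}{M} \setminus {\cal D}$ and $\oset{$\thicksim$}{(M \setminus {\cal D})}$ are genuinely the same matrix. This requires noting that the construction $N \mapsto \oset{$\thicksim$}{N}$ adds $n$ new rows where $n = |{\cal C}(N)|$ is the number of \emph{columns}, and deleting rows from $M$ does not change the column set, so both constructions prepend the same $n \times n$ identity block; meanwhile the rows indexed by ${\cal D}$ are rows of the original $M$, which appear (unchanged) as the bottom block of $\oset{$\thicksim$}{M}$, so removing them from $\oset{$\thicksim$}{M}$ exactly produces the bottom block $M \setminus {\cal D}$. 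Since Observation~\ref{COP} is stated for an arbitrary binary matrix, applying it to $M \setminus {\cal D}$ is legitimate, and no further work is needed; this is why the proof is a one-line consequence once the bookkeeping is done.
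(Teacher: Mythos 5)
Your proof is correct and follows the same (implicit) route as the paper, which states the corollary without proof precisely because it is the immediate consequence of Observation~\ref{COP} applied to $M \setminus {\cal D}$, once one notes that deleting rows of ${\cal D} \subseteq {\cal R}(M)$ from $\oset{$\thicksim$}M$ leaves the identity block untouched and hence $\oset{$\thicksim$}M \setminus {\cal D}$ coincides with the padded version of $M \setminus {\cal D}$. Your careful justification of that matrix identity (same column set, identity block unaffected) is exactly the bookkeeping the paper leaves to the reader.
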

\begin{lemma}
\label{cop-ig}
If $M$ has COP, then $G(M)$ is an interval graph.  Further,  for every maximal clique $Q$ in $G(M)$ there exists a column $c_k$ in $M$ such that $vert(c_k)=Q$.
\end{lemma}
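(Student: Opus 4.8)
The plan is to derive both statements from an ICPIA for $\mathcal{S}(M)$, which exists by Theorem~\ref{NS} since $M$ has COP. As a preliminary normalization I would assume that $M$ has no all-zero row: such a row corresponds to an isolated vertex of $G(M)$, contributes nothing to the COP of $M$, and can either be deleted in preprocessing or be given its own private interval disjoint from all others, so it affects neither conclusion. Fix, then, an ICPIA $\mathcal{I}=\{I_1,\dots,I_m\}$ for $\mathcal{S}(M)$.

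For the first part I would show that $\mathcal{I}$ is an interval representation of $G(M)$. By definition of the derived graph, $\{v_i,v_j\}\in E(G(M))$ iff some column $c_k$ has $M_{ik}=M_{jk}=1$, i.e.\ iff $S_i\cap S_j\neq\emptyset$. The ICPIA property (Theorem~\ref{NS}) gives $|S_i\cap S_j|=|I_i\cap I_j|$, so $S_i\cap S_j\neq\emptyset$ iff $I_i\cap I_j\neq\emptyset$. Hence $v_i$ and $v_j$ are adjacent in $G(M)$ exactly when $I_i\cap I_j\neq\emptyset$, which is precisely the condition for $\mathcal{I}$ to be an interval representation of $G(M)$; therefore $G(M)$ is an interval graph.

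For the second part, let $Q$ be a maximal clique of $G(M)$. By the equivalence just established, the subfamily $\{S_i : v_i\in Q\}$ is pairwise intersecting, and by Lemma~\ref{lem-helly} (applicable since $M$ has COP) $\mathcal{S}(M)$ has the Helly property, so this subfamily has a common element: there is a column $c_k$ with $M_{ik}=1$ for every $v_i\in Q$, i.e.\ $Q\subseteq vert(c_k)$. To finish I would observe that $vert(c_k)$ is itself a clique of $G(M)$, since any two of its vertices share the $1$-entry in column $c_k$ and are thus adjacent; maximality of $Q$ together with $Q\subseteq vert(c_k)$ then forces $Q=vert(c_k)$.

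I do not expect a genuine obstacle here: the argument is a short chain of equivalences resting on Theorem~\ref{NS} and Lemma~\ref{lem-helly}. The only points that need a little care are the final maximality step of the second part — ruling out $Q\subsetneq vert(c_k)$, which is exactly where one uses that $vert(c_k)$ is already a clique — and the degenerate all-zero-row situation, which is why I would dispose of it at the outset with the normalization remark.
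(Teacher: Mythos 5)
Your proposal is correct and follows essentially the same route as the paper: the second part is identical (the sets $\{S_i : v_i \in Q\}$ are pairwise intersecting, Lemma~\ref{lem-helly} gives a common column $c_k$, and maximality of $Q$ forces $Q = vert(c_k)$), while for the first part you simply invoke the ICPIA of Theorem~\ref{NS} where the paper explicitly constructs the same natural interval assignment (min/max column indices in the COP-permuted matrix) and argues adjacency iff interval intersection directly. Your handling of all-zero rows is a small extra care the paper omits, but it does not change the argument.
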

\begin{proof}
Consider the columns of $M$ in the order of a permutation $\sigma$ that results in COP. Now, for every vertex $v_i$ in $G(M)$ assign the interval $I_i=[j,k]$ where $j$ and $k$ are the minimum and maximum column indices, respectively, with $M_{ij}=M_{ik}=1$. Consider two vertices $v_a$ and $v_b$ in $G(M)$. Let $I_a=[j_1,k_1]$ and $I_b=[j_2,k_2]$ be the intervals assigned to $v_a$ and $v_b$ respectively. Now, by the definition of derived graphs, $v_a$ and $v_b$ are adjacent if and only if there is a column $c_r$ ($\mbox{min }\{j_1,j_2\} \leq r \leq \mbox{min }\{k_1,k_2\}$) in $M$ with $M_{ar}=M_{br}=1$. The existence of such a column $c_r$ is well-defined if and only if $I_a \cap I_b \neq \emptyset$. Therefore, $v_a$ and $v_b$ are adjacent in $G(M)$ if and only if $I_a \cap I_b \neq \emptyset$. Thus, $G(M)$ is an interval graph.
\noindent
We now prove the second part of the lemma.  Let $Q=\{v_1,\cdots,v_q\}$ be a maximal clique in $G(M)$. Consider the submatrix $M'$ with ${\cal R}(M')=\{r_i \in {\cal R}(M) \mid v_i \in Q\}$.  Recall that   ${\cal S}(M') = \{S_i \mid r_i \in {\cal R}(M')\}$.  Any two $S_i, S_j \in {\cal S}(M')$ have a non-empty intersection, and therefore, from Lemma \ref{lem-helly} , if follows that $\displaystyle |\bigcap_{i=1}^q S_i | > 0$.   Let $k$ be an element in $\bigcap_{i=1}^q S_i$, then it follows that $vert(c_k) = Q$. Note that $vert(c_k) = Q$ because $Q$ is a maximal clique.  Hence the lemma. \qed  
\end{proof}
\begin{corollary}
\label{IG-1}
If $M$ has COP, then $G(\oset{$\thicksim$}M)$ is an interval graph, and $\oset{$\thicksim$}M$ is the clique matrix of $G(\oset{$\thicksim$}M)$.
\end{corollary}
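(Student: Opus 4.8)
The plan is to derive both claims by applying Observation~\ref{COP} and Lemma~\ref{cop-ig} to $\oset{$\thicksim$}M$ rather than to $M$ itself. First I would invoke Observation~\ref{COP}: since $M$ has COP, so does $\oset{$\thicksim$}M$. Lemma~\ref{cop-ig}, applied to $\oset{$\thicksim$}M$, then immediately yields that $G(\oset{$\thicksim$}M)$ is an interval graph, which is the first assertion; it also tells us that every maximal clique of $G(\oset{$\thicksim$}M)$ has the form $vert(c_k)$ for some column $c_k$ of $\oset{$\thicksim$}M$. What remains is to show that $\oset{$\thicksim$}M$ is precisely the clique matrix of $G(\oset{$\thicksim$}M)$, i.e.\ that $c_k \mapsto vert(c_k)$ is a bijection from the columns of $\oset{$\thicksim$}M$ onto the maximal cliques of $G(\oset{$\thicksim$}M)$ whose incidence pattern is the matrix itself.

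The key device is the identity block $I$ sitting on top of $M$ in $\oset{$\thicksim$}M$. For each column $c_k$ there is a row of $\oset{$\thicksim$}M$ (namely the $k$-th row of $I$) whose unique $1$ lies in column $c_k$; let $u_k$ be the corresponding vertex of $G(\oset{$\thicksim$}M)$. Then $u_k \in vert(c_k)$ while $u_k \notin vert(c_j)$ for every $j \ne k$. Three consequences follow. (i) $vert(c_k)$ is a nonempty clique of $G(\oset{$\thicksim$}M)$: nonempty because it contains $u_k$, and a clique because, by the definition of the derived graph, all vertices with a $1$ in column $c_k$ are pairwise adjacent. (ii) The sets $vert(c_k)$ are pairwise distinct, since $u_k$ separates $c_k$ from every other column. (iii) Each $vert(c_k)$ is in fact a \emph{maximal} clique: were it properly contained in a maximal clique $Q$, then by Lemma~\ref{cop-ig} we would have $Q = vert(c_j)$ for some column $c_j$, and $u_k \in vert(c_k) \subseteq vert(c_j)$ would force $j=k$, contradicting $vert(c_k) \subsetneq Q = vert(c_j)$.

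Combining (i)--(iii) with the ``onto'' half of Lemma~\ref{cop-ig}, the correspondence $c_k \mapsto vert(c_k)$ is a bijection between the columns of $\oset{$\thicksim$}M$ and the maximal cliques of $G(\oset{$\thicksim$}M)$. Finally, by the definitions of $supp$ and $vert$, the $(i,k)$ entry of $\oset{$\thicksim$}M$ is $1$ exactly when $v_i \in vert(c_k)$, i.e.\ exactly when $v_i$ lies in the maximal clique associated with column $c_k$; this is precisely the defining property of a clique matrix, so $\oset{$\thicksim$}M$ is the clique matrix of $G(\oset{$\thicksim$}M)$ (up to the usual reordering of columns). I do not anticipate a serious obstacle: the only point that needs care is the maximality argument in (iii), and it goes through cleanly once one exploits the ``private'' identity rows together with the fact, supplied by Lemma~\ref{cop-ig}, that every maximal clique of $G(\oset{$\thicksim$}M)$ is realized by some column.
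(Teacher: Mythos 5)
Your proposal is correct and follows essentially the same route as the paper: first Observation~\ref{COP} plus Lemma~\ref{cop-ig} applied to $\oset{$\thicksim$}M$, and then the ``private'' identity rows (the paper's distinguishing entries) to show each column $vert(c_k)$ is itself a maximal clique, giving the bijection with maximal cliques. Your write-up merely makes explicit the maximality step that the paper states tersely, so there is nothing to add.
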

\begin{proof}
From Observation \ref{COP}, $M$ has COP implies that $\oset{$\thicksim$}M$ has COP, and from Lemma \ref{cop-ig} it follows that $G(\oset{$\thicksim$}M)$ is an interval graph, and that each maximal clique corresponds to a column in $\oset{$\thicksim$}M$. Now in  $\oset{$\thicksim$}M$, each column has a
{\em distinguishing} entry where there is a 1, and all other entries in that row are zero.  This shows that 
 each column corresponds to a maximal clique in $G(\oset{$\thicksim$}M)$.   Therefore, $\oset{$\thicksim$}M$ is the clique matrix of $G(\oset{$\thicksim$}M)$. \qed
\end{proof}
\section{$d$-COS-R via  Interval Deletion}
\label{our-algo}
The basic idea in this algorithm is that we transform the given instance $(M,d)$ of $d$-COS-R to an instance $(M',d')$ where $M'$ has the additional property that $\oset{$\thicksim$}M'$ is the clique matrix of a graph $G(\oset{$\thicksim$}M')$.  Our recursive algorithm explores a recursion tree in which each leaf corresponds to an interval deletion problem.\\
\noindent \fbox{
\parbox{16cm}{
{\bf Algorithm COS-R($M,{\cal D},d$)}\\
Input: An instance ${\cal I}= ( M_{m \times n},d ) $ where $M$ is a binary matrix and $d \ge 1$. \\
Output: Return a set $\cal D$ of at most $d$ rows (if one exists) such that $M \setminus {\cal D}$ has COP.\\
{\bf (Step 0)} If $M$ has COP and $d \geq 0$ then Return ${\cal D}$.\\
{\bf (Step 1)} If $d < 0$ then Return 'NO'/* parameter budget exhausted */ \\
{\bf (Step 2)}(Branching Rule 1) If there exists three pairwise intersecting sets $S_1,S_2,S_3 \in {\cal S}(M)$ satisfying either of the following properties:\\
(H1) $S_1 \cap S_2 \cap S_3 = \emptyset$.\\
(H2) None of $S_1,S_2$ and $S_3$ is contained in the union of the other two. \\
then branch into 3 instances ${\cal I}_i=(M_i,d_i)$ (where $i \in \{1,2,3\}$) \\
Set ${\cal D}_i\leftarrow{\cal D} \cup \{r_i\}$ and $M_i\leftarrow M \setminus \{r_i\}$ \\
Update $d_i \leftarrow d-1$ /* Parameter drops by 1 */ \\
For some $i \in \{1,2,3\}$, if COS-R($M_i,{\cal D}_i ,d_i$) returns a solution ${\cal D}_i$, then Return ${\cal D}_i$, else Return 'NO'\\
/* {\em \footnotesize{Invariant: }} See Lemma \ref{two-columns} and Corollary \ref{max-cli} */\\
{\bf (Step 3)}(Branching Rule 2) If there exists two columns $c_i$ and $c_j$ in $M$ such that $G[vert(c_i) \cup vert(c_j)]$ has an induced cycle $C=\{v_1,v_2,v_3,v_4\}$,\\
then branch into 4 instances ${\cal I}_i=(M_i,d_i)$ (where $i \in \{1,2,3,4\}$) \\
Set ${\cal D}_i\leftarrow{\cal D} \cup \{r_i\}$ and $M_i\leftarrow M \setminus \{r_i\}$ \\
Update $d_i \leftarrow d-1$ /* Parameter drops by 1 */ \\
For some $i \in \{1,2,3,4\}$, if COS-R($M_i,{\cal D}_i ,d_i$) returns a solution ${\cal D}_i$, then Return ${\cal D}_i$, else Return 'NO'\\
/* {\em \footnotesize{Invariant: }} See Lemma \ref{rules12} */\\
{\bf(Step 4)}(Branching Rule 3) If there is a maximal clique $Q$ such that there does not exist a column $c_l$ such that $vert(c_l)=Q$ then, let $Q'$ be a minimal subset of $Q$ with the property that there is no column $c_{l'}$ such that $Q \subseteq vert(c_{l'})$.\\
/*$Q'$ is well-defined as it is a subset of $Q$ and $Q$ itself is in two columns */ \\
Let $v_1$,$v_2$,$v_3$ be vertices in $Q'$, and let the corresponding rows be $r_1$,$r_2$,$r_3$ respectively.\\
then branch into 3 instances ${\cal I}_i=(M_i,d_i)$ (where $i \in \{1,2,3\}$) \\
Set ${\cal D}_i\leftarrow{\cal D} \cup \{r_i\}$ and $M_i\leftarrow M \setminus \{r_i\}$ \\
Update $d_i \leftarrow d-1$ /* Parameter drops by 1 */ \\
For some $i \in \{1,2,3\}$, if COS-R($M_i,{\cal D}_i ,d_i$) returns a solution ${\cal D}_i$, then Return ${\cal D}_i$, else Return 'NO'\\
/* {\em \footnotesize{Invariant: }} See Lemma \ref{rules123} */\\
{\bf (Step 5)}(Interval Deletion) $V'$=Interval-Deletion($G(\oset{$\thicksim$}M),d$). \\
{\bf (Step 6)}If Interval-Deletion returns `NO' then Return.\\
\indent \hspace{1.5cm} Otherwise, Return the set ${\cal D}= {\cal D} \cup \{r_i \in {\cal R}(M)\mid v_i \in V'\}$. 
}
} \\
At each leaf in the recursion tree, an interval deletion problem is solved.  Each node in the recursion tree has at most 4 subproblems, and therefore, the tree has at most $4^d$ leaves, and then using the recent FPT algorithm for Interval Deletion \cite{CM12},
we get an overall running time of $O^*(10^d)$ for our algorithm.
Recall that, for a matrix $M$, the derived graph is denoted by $G(M)$ and its set system is denoted by ${\cal S}(M)=\{S_1,\ldots,S_m\}$.
The recursive function COS-R is called initially with the input matrix $M$, the initial solution set ${\cal D}=\emptyset$ and the parameter $d$ as inputs.  
 It either returns a set ${\cal D}$ of at most $d$ rows such that $M \setminus {\cal D}$ has COP or returns 'NO'. COS-R makes a call to the function Interval-Deletion$(G,d)$ which {\em either} returns a set of vertices $X$ such that $|X| \leq d$, and  $G \setminus X$ is an interval graph or returns 'NO'.\\ \\
\noindent 
{\bf Correctness of the Algorithm: }
We prove the correctness of the algorithm by proving invariants that hold at the end of each branching rule.  
\begin{lemma}
\label{rule1}
Let $M$ be a matrix for which branching rule 1 applies, and sets $S_1, S_2, S_3$ violate at least one of the two conditions checked in rule 1. Then, any solution $\cal D$ of $d$-COS-R includes at least one of the corresponding rows $r_1, r_2, r_3$.
\end{lemma}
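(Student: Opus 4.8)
The plan is to argue by contradiction, using Lemma \ref{lem-helly} applied to the matrix obtained after the deletion. Suppose ${\cal D}$ is a solution to $d$-COS-R, so that $M' := M \setminus {\cal D}$ has COP, and suppose toward a contradiction that ${\cal D}$ contains none of the rows $r_1, r_2, r_3$. Then $r_1, r_2, r_3$ all survive in $M'$.

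The one point that needs care is that deleting rows does not alter the set-system entries of the surviving rows: since columns are never removed, for each $r_i \notin {\cal D}$ we have $S_i = \{\,j \mid M'_{ij} = 1\,\} = \{\,j \mid M_{ij} = 1\,\}$. Hence $S_1, S_2, S_3 \in {\cal S}(M')$, and since pairwise intersections are computed from these (unchanged) column sets, $S_1, S_2, S_3$ remain pairwise intersecting in ${\cal S}(M')$. Now apply Lemma \ref{lem-helly} to $M'$, which has COP: the first part gives that ${\cal S}(M')$ satisfies the Helly property, so the pairwise intersecting triple $S_1, S_2, S_3$ has a common element, i.e. $S_1 \cap S_2 \cap S_3 \neq \emptyset$, ruling out (H1); the second part gives that one of $S_1, S_2, S_3$ is contained in the union of the other two, ruling out (H2). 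But by hypothesis the triple satisfies at least one of (H1), (H2) — a contradiction. Therefore any solution ${\cal D}$ must include at least one of $r_1, r_2, r_3$.

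I do not anticipate a real obstacle here; the argument is essentially an immediate invocation of Lemma \ref{lem-helly}, and the only subtlety is the observation above that row-deletion preserves the surviving $S_i$'s (and hence their pairwise and triple intersections), so the triple $S_1, S_2, S_3$ continues to witness the forbidden configuration in $M \setminus {\cal D}$. This is exactly what certifies that the three-way branch in Step 2 is exhaustive, which is the content needed for the correctness of Branching Rule 1.
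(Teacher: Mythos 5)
Your argument is correct and is exactly the one the paper intends: the paper's proof is the single line ``follows from Lemma~\ref{lem-helly},'' and your write-up simply makes explicit the (easy but worth noting) point that row deletion leaves the surviving sets $S_1,S_2,S_3$ unchanged, so the triple still violates the Helly/containment conclusion of Lemma~\ref{lem-helly} in $M\setminus{\cal D}$, which has COP. Same approach, just spelled out in more detail than the paper.
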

\begin{proof}
The proof follows from Lemma \ref{lem-helly}.
\end{proof}
\noindent
{\bf Branching Rule 1: } 
To understand the effect of Branching Rule 1, consider this example of the matrices $M_1=\left( \begin{smallmatrix}1 & 1 & 1 & 1 & 1 & 0 & 0 & 0 \\0 & 1 & 0 & 0 & 1 & 1 & 1 & 0 \\ 1 & 0 & 1 & 1 & 0 & 1 & 0 & 1 \end{smallmatrix} \right)$ and $M_2=\left( \begin{smallmatrix}1 & 0 & 1 & 0 & 1 & 1 & 0 & 0\\0 & 1 & 0 & 1 & 0 & 1 & 1 & 0 \\ 1 & 0 & 0 & 0 & 0 & 1 & 0 & 1 \end{smallmatrix} \right)$, both do not have COP.   In $M_1$ and $M_2$, the sets corresponding to the rows are pairwise intersecting. However, in $M_1$ the sets do not have a common element while in $M_2$, none of them is contained in the union of other two. The following lemma formalizes the crucial property satisfied by matrices for which branching rule 1 is not applicable.
\begin{lemma}
\label{two-columns}
Let $M$ be a matrix on which branching rule 1 is not applicable. Then, for every maximal clique $Q$ in $G(M)$, there are at most two columns $c_i$ and $c_j$
such that $Q \subseteq vert(c_i) \cup vert(c_j)$. 
\end{lemma}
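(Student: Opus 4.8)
The plan is to recast the statement as a bound on the number of columns needed to \emph{cover} $Q$, and to derive that bound from the hypothesis that branching rule 1 is not applicable. Note first that ``$Q\subseteq vert(c_i)\cup vert(c_j)$'' says exactly that the columns $c_i,c_j$ together cover $Q$, in the sense that every vertex of $Q$ has a $1$ in column $i$ or in column $j$. Assuming, as we may, that $M$ has no all-zero rows (such rows never affect COP), every vertex of $G(M)$ lies in $vert(c)$ for some column $c$; in particular the family of all columns covers $Q$. Hence, by discarding redundant columns, we obtain an inclusion-minimal set $\mathcal{C}=\{c_{l_1},\dots,c_{l_k}\}$ of columns covering $Q$, and it suffices to prove $k\le 2$.

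The heart of the argument is a contradiction: suppose $k\ge 3$. By irredundancy of $\mathcal{C}$, for each $i$ the subfamily $\mathcal{C}\setminus\{c_{l_i}\}$ does not cover $Q$, so we may fix a ``private'' vertex $v_{j_i}\in Q$ that is covered by $c_{l_i}$ but by no other column of $\mathcal{C}$; equivalently, $l_i\in S_{j_i}$ while $l_t\notin S_{j_i}$ for every $t\ne i$. Taking $i=1,2,3$ produces three vertices $v_{j_1},v_{j_2},v_{j_3}$ of $Q$, and they are pairwise distinct since $v_{j_i}\in vert(c_{l_i})$ while $v_{j_t}\notin vert(c_{l_i})$ for $t\ne i$. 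As $v_{j_1},v_{j_2},v_{j_3}$ are pairwise adjacent in $G(M)$, the sets $S_{j_1},S_{j_2},S_{j_3}$ are pairwise intersecting; and since, for each $i$, $l_i\in S_{j_i}$ but $l_i\notin S_{j_t}$ for the two indices $t\ne i$, we get $l_i\in S_{j_i}\setminus(S_{j_t}\cup S_{j_{t'}})$, where $\{t,t'\}=\{1,2,3\}\setminus\{i\}$, so $S_{j_i}\not\subseteq S_{j_t}\cup S_{j_{t'}}$. Thus $S_{j_1},S_{j_2},S_{j_3}$ are three pairwise-intersecting sets of $\mathcal{S}(M)$, none contained in the union of the other two, i.e. they satisfy condition (H2); hence branching rule 1 applies to $M$, contradicting the hypothesis. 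Therefore $k\le 2$, proving the lemma.

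Everything outside the contradiction step is routine bookkeeping, and the single place where the hypothesis is used is in converting the irredundancy of a cover of size $\ge 3$ into a violation of (H2). I expect no real obstacle: the only points requiring (minor) care are checking that the three private vertices exist and are pairwise distinct, and the harmless normalisation that removes all-zero rows. It is perhaps worth remarking that condition (H1) is not needed for this lemma, and that no use is made of $M$ having COP or of $G(M)$ being an interval graph.
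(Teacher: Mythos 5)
Your proof is correct and follows essentially the same route as the paper's: assume a (minimal) cover of $Q$ by at least three columns, extract one private vertex per column (the paper phrases this as an identity submatrix on three rows and three columns), and observe that the corresponding pairwise-intersecting sets violate condition (H2), contradicting the inapplicability of branching rule 1. Your use of an inclusion-minimal rather than minimum cover, and the explicit handling of all-zero rows, are only cosmetic differences.
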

\begin{proof}Assume on the contrary that $Q$ is a maximal clique in $G(M)$ and $T$ is a minimum set of columns such that $Q \subseteq \bigcup_{c_i \in T} vert(c_i)$ with $|T| \geq 3$. Consider the submatrix $N$ with ${\cal R}(N)=\{ r_i \in {\cal R}(M) \mid v_i \in Q \}$. Consider any 3 columns $c_1, c_2, c_3$ from $T$.  Since $T$ is a minimum set of columns whose vertices contain $Q$ in $G(M)$, it follows that there are 3 vertices $v_1, v_2, v_3 \in Q$ such that the corresponding rows along with the colums $c_1, c_2, c_3$ form an identity submatrix which can be visualized as $\left( \begin{smallmatrix} \cdots & 1 & \cdots &0 & \cdots & 0 & \cdots\\ \cdots & 0 & \cdots &1 & \cdots & 0 & \cdots\\\cdots & 0 & \cdots &0 & \cdots & 1 & \cdots\end{smallmatrix} \right)$.
Thus each of the  sets $S_1,S_2$ and $S_3$, corresponding to $r_1, r_2, \text{ and } r_3$, has an element that is not present in the other two. Therefore, none of $S_1, S_2$ and $S_3$, is contained in the union of the other two, therefore branching rule 1 would have been applied. This is a contradiction to the hypothesis in the lemma that branching rule 1 is not applicable.  \qed
\end{proof}
\begin{corollary}
\label{max-cli}
Every maximal clique in $G(M)$ is a maximal clique in $G[vert(c_i) \cup vert(c_j)]$ for some pair of columns $c_i,c_j $ in $M$.
\end{corollary}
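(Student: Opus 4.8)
The plan is to obtain the statement as an immediate consequence of Lemma~\ref{two-columns}. First I would fix a maximal clique $Q$ of $G(M)$ and observe that it is covered by columns: if $|Q|\ge 2$, every vertex of $Q$ has a neighbour in $Q$, hence (by the definition of the derived graph) lies in $vert(c)$ for some column $c$, so there is a set $T$ of columns with $Q\subseteq\bigcup_{c\in T}vert(c)$; a minimum such $T$ has size at most two by Lemma~\ref{two-columns}. Let $c_i,c_j$ be the columns of such a minimum $T$ (possibly $c_i=c_j$) and set $H=G[vert(c_i)\cup vert(c_j)]$.

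Next I would check that $Q$ is a clique of $H$: this is immediate, since $Q\subseteq V(H)$, $Q$ is a clique of $G(M)$, and $H$ is an \emph{induced} subgraph of $G(M)$, so adjacency inside $V(H)$ agrees with adjacency in $G(M)$. Then I would prove maximality by contradiction: if some $w\in V(H)\setminus Q$ were adjacent in $H$ to every vertex of $Q$, then by the same induced-subgraph property $w$ would be adjacent in $G(M)$ to every vertex of $Q$, making $Q\cup\{w\}$ a clique of $G(M)$ properly containing $Q$ --- contradicting the maximality of $Q$ in $G(M)$. Hence $Q$ is a maximal clique of $H$.

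Since Lemma~\ref{two-columns} carries the real content, the proof is short; the only delicate point, and the one I would expect to need care, is the degenerate case in which $Q=\{v\}$ is a single vertex. If the row $r_i$ corresponding to $v$ has a $1$ in some column $c_k$, we take $c_i=c_j=c_k$ and are done; if $r_i$ is the all-zero row then $v$ is isolated in $G(M)$, and we rely on the standing assumption that the input matrix has no all-zero row (equivalently, one may run the argument on $\oset{$\thicksim$}M$, which has none). With this understood, the stated conclusion holds for every maximal clique of $G(M)$.
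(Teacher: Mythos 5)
Your proposal is correct and matches the derivation the paper intends: the paper states Corollary~\ref{max-cli} without proof as an immediate consequence of Lemma~\ref{two-columns}, and your argument (cover $Q$ by a minimum set of at most two columns via that lemma, then note $G[vert(c_i)\cup vert(c_j)]$ is an induced subgraph of $G(M)$, so $Q$ stays a clique and inherits maximality) is exactly that derivation spelled out. Your attention to the degenerate all-zero-row case is a reasonable extra precaution that the paper glosses over, and does not change the substance.
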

\noindent An example is shown in Figure \ref{fig1}. The maximal cliques $Q_1$ and $Q_2$ in $G(M)$ are such that $Q_1$=$\{v_1,v_2,v_4,v_5,v_6\}$ = $vert(c_1) \cup vert(c_5)$ and $Q_2=\{v_2,v_3,v_4,v_5,v_6\}$ = $vert(c_3) \cup vert(c_4)$. It is also clear from the figure that no five clique is {\em present in a column}.\\ 
\begin{figure}[h]
\centering
\includegraphics[scale=.26]{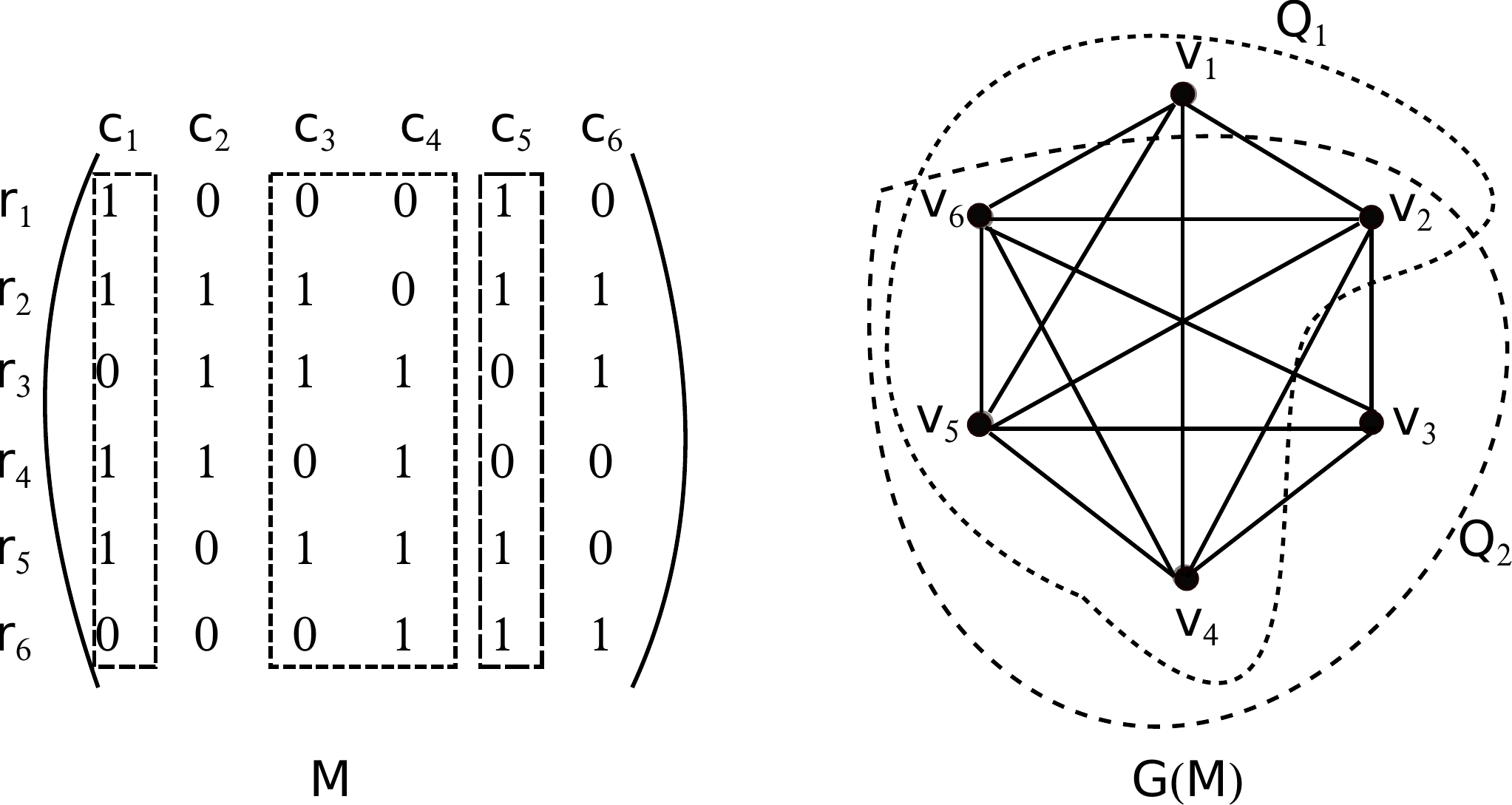}
 \caption{Maximal cliques $Q_1$ and $Q_2$ in $G(M)$ \label{fig1}}
\end{figure}

\noindent\textbf{Branching Rule 2: }Let $M$ be a matrix on which branching rule 1 is not applicable. Now for each maximal clique $Q$ in $G(M)$ for which there does not exist  a column $c_k$ in $M$ such that $Q=vert(c_k)$, we need to branch according to Lemma \ref{symm-diff}.  An important question here is that how do we check if there are such cliques.  From Corollary \ref{max-cli}, we know that the maximal cliques of $G(M)$ can be enumerated by enumerating the maximal cliques of $G[vert(c_i) \cup vert(c_j)]$ for each pair of columns $c_i$ and $c_j $ in $M$.  However, there could be an exponential number of 
maximal cliques in  $G[vert(c_i) \cup vert(c_j)]$.   We handle this problem  by checking if $G[vert(c_i) \cup vert(c_j)]$ is non-chordal.  If it is chordal, then there are only a polynomial number of maximal cliques, and it is easy to enumerate each maximal clique, and branch as suggested
in Lemma \ref{symm-diff}.  If $G[vert(c_i) \cup vert(c_j)]$ is not chordal, there is a chordless cycle of length more than 3, and we show in the following lemma that such a chordless cycle can only be of length 4.  However, from Lemma \ref{cop-ig} and Theorem \ref{char-int}, it follows that any induced cycle of length 4 is forbidden in $G(M)$. Therefore, by our branching rule 2, we guarantee that $G[vert(c_i) \cup vert(c_j)]$ does not have an induced cycle of length 4. We now show in the following two lemmas that this guarantees that $G[vert(c_i) \cup vert(c_j)]$ is chordal. 
\begin{lemma}
\label{4-chord}
Let $c_p$ and $c_q$ be two columns in $M$ on which branching rule 1 is not applicable. Then, every induced cycle in $G'=G[vert(c_p) \cup vert(c_q)]$ is of length at most 4.  
\end{lemma}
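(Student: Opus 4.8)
Let $G' = G[vert(c_p) \cup vert(c_q)]$ and suppose $C = v_1 v_2 \cdots v_k$ is an induced cycle in $G'$ with $k \ge 5$. I want to derive a contradiction with the assumption that branching rule 1 does not apply. The key structural fact I will exploit is that $vert(c_p)$ and $vert(c_q)$ are each cliques, so every vertex of $C$ lies in $vert(c_p) \cup vert(c_q)$, and since $C$ is induced (hence has no triangle when $k \ge 4$) no three consecutive vertices of $C$ can all lie in the same $vert(c_\ell)$. Consequently, going around the cycle, the vertices alternate in a controlled way between $vert(c_p) \setminus vert(c_q)$ and $vert(c_q) \setminus vert(c_p)$ (a vertex in the intersection would be adjacent to everything, impossible in an induced cycle of length $\ge 4$ unless $k = 3$). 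For $k \ge 5$ this forces an odd-length pattern that cannot be two-colored by $\{p, q\}$ consistently: some two consecutive vertices $v_a, v_{a+1}$ both belong to $vert(c_p)$ (say), and the two "boundary" transitions around the rest of the cycle must each be a single vertex of the other color, which for $k \ge 5$ leaves at least three vertices to place and forces either a chord or a triangle.

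More concretely, the step I would carry out is this: first argue $vert(c_p) \cap vert(c_q)$ contains no vertex of $C$ (such a vertex is universal to $C$, contradicting $k \ge 4$); so $V(C)$ partitions into $A = V(C) \cap vert(c_p)$ and $B = V(C) \cap vert(c_q)$. Since $A$ and $B$ are cliques and $C$ is chordless with $k \ge 4$, we have $|A| \le 2$ and $|B| \le 2$, hence $k = |A| + |B| \le 4$. That is the whole argument — short and clean — and it gives $k \le 4$ directly, so in fact I do not even need the $k \ge 5$ case analysis. The only subtlety is justifying $|A| \le 2$: three vertices of $C$ inside the clique $vert(c_p)$ would be pairwise adjacent, and any three vertices of an induced cycle of length $\ge 4$ include at least one non-adjacent pair, contradiction. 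Similarly $|B| \le 2$.

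**Main obstacle.** The one place that needs care is the claim that a vertex of $vert(c_p) \cap vert(c_q)$ lying on $C$ is universal to $C$: it is adjacent to every vertex of $vert(c_p)$ and every vertex of $vert(c_q)$, hence to every vertex of $G'$, hence to both of its non-neighbors on $C$ — but an induced cycle of length $\ge 4$ has a vertex with non-neighbors, so this is the contradiction. I would also state explicitly that we only need $k \ge 4$, so "length at most 4" follows with room to spare, and in particular the "at most 4" in the statement is not tight from this argument's perspective (it would even give that any induced cycle of length $\ge 5$ is impossible); but since induced $4$-cycles genuinely can occur in $G'$, the bound as stated is the right one to record. Finally I note where branching rule 1 is (implicitly) used: Lemma~\ref{two-columns} and Corollary~\ref{max-cli} are what guarantee the relevant cliques are captured by pairs of columns, but for this particular lemma the hypothesis "rule 1 not applicable" is only needed to make $G'$ the meaningful object; the cycle-length bound itself is a purely graph-theoretic consequence of $G'$ being covered by two cliques.
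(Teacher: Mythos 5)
Your proof is correct and uses essentially the same argument as the paper: since $vert(c_p)$ and $vert(c_q)$ are cliques, an induced cycle of length at least $4$ can contain at most two vertices from each, giving the bound immediately. Your extra care about vertices in $vert(c_p)\cap vert(c_q)$ is fine but not strictly needed, since $k=|A\cup B|\le |A|+|B|\le 4$ already suffices; the paper's one-line proof is the same observation.
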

\begin{proof}
Any induced cycle in $G'$ can have at most two vertices from $vert(c_p)$ and $vert(c_q)$ each, as they both induce cliques in $G'$.  Therefore, any induced cycle can be of length at most 4. \qed
\end{proof}
\begin{lemma}
\label{corr-rule2}
Let $M$ be a matrix on which branching rule 1 is not applicable.  For two columns $c_p$ and $c_q$ in $M$, let $C$ be an induced cycle of four vertices 
such that $C$ is in $G[vert(c_p) \cup vert(c_q)]$.   Then, any solution $\cal D$ of $d$-COS-R must include  at least one of the four rows corresponding to the four vertices in $C$.
\end{lemma}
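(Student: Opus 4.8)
The plan is to argue by contradiction, using the induced-$C_4$ obstruction for interval graphs. Suppose $\mathcal{D}$ is a solution of $d$-COS-R that avoids all four rows $r_1,r_2,r_3,r_4$ corresponding to the vertices of the induced cycle $C=\{v_1,v_2,v_3,v_4\}$. Then $M\setminus\mathcal{D}$ has COP, so by Lemma~\ref{cop-ig} the derived graph $G(M\setminus\mathcal{D})$ is an interval graph, and hence by Theorem~\ref{char-int} it has no induced cycle of length $4$. The goal is to exhibit exactly such a cycle inside $G(M\setminus\mathcal{D})$, namely $C$ itself, yielding the contradiction.

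The one point that needs care is the behaviour of the derived graph under row deletion. I would first record the simple observation that for any $\mathcal{D}\subseteq\mathcal{R}(M)$, the graph $G(M\setminus\mathcal{D})$ is exactly the subgraph of $G(M)$ induced on $\{v_i \mid r_i\notin\mathcal{D}\}$: adjacency of two surviving vertices $v_a,v_b$ is determined solely by the existence of a column $c_k$ with $M_{ak}=M_{bk}=1$, and deleting other rows changes neither $M_{ak}$ nor $M_{bk}$ for any $k$; deleting rows does not remove columns. Thus no edge among surviving vertices is created or destroyed, so $G(M\setminus\mathcal{D})$ is an induced subgraph of $G(M)$.

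With this in hand the argument closes quickly. Since $G[vert(c_p)\cup vert(c_q)]$ is by definition an induced subgraph of $G(M)$, and $C$ is an induced $4$-cycle in $G[vert(c_p)\cup vert(c_q)]$, the cycle $C$ is also an induced $4$-cycle in $G(M)$. By assumption none of $r_1,r_2,r_3,r_4$ lies in $\mathcal{D}$, so all four vertices $v_1,v_2,v_3,v_4$ survive in $G(M\setminus\mathcal{D})$; as this is an induced subgraph of $G(M)$ containing all of $C$, the cycle $C$ is an induced $4$-cycle in $G(M\setminus\mathcal{D})$ as well. This contradicts the fact that $G(M\setminus\mathcal{D})$ is an interval graph (Lemma~\ref{cop-ig}, Theorem~\ref{char-int}). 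Hence every solution $\mathcal{D}$ of $d$-COS-R must contain at least one of $r_1,r_2,r_3,r_4$.

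The only step with any subtlety is the observation that row deletion in $M$ corresponds to vertex deletion in $G(M)$ (so that induced $4$-cycles are preserved as long as their vertices survive); everything else is a direct invocation of Lemma~\ref{cop-ig} and Theorem~\ref{char-int}. I do not anticipate a genuine obstacle here — the main content of the rule's correctness was already isolated in Lemmas~\ref{4-chord} and~\ref{cop-ig} together with the forbidden-$C_4$ characterization.
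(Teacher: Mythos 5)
Your proposal is correct and follows essentially the same route as the paper: assume a solution $\cal D$ avoids all four rows, use Lemma~\ref{cop-ig} to get that the derived graph of $M \setminus {\cal D}$ is an interval graph, and contradict Theorem~\ref{char-int} via the surviving induced $C_4$. The only difference is that you explicitly verify that row deletion corresponds to vertex deletion in the derived graph (so $C$ survives as an induced cycle), a step the paper leaves implicit; this is a fine, correct addition rather than a different argument.
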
 
\begin{proof}
Let ${\cal D}$ be a solution and let $M'=M \setminus {\cal D}$.  If $C$ is in $G(M')$, then it means that $G[M']$, an interval graph by Lemma \ref{cop-ig},
has an induced cycle of length 4, which is a contradiction to Theorem \ref{char-int}, which characterizes interval graphs as a subclass of graphs without induced  cycles on 4 vertices.
Hence the lemma.   \qed
\end{proof}
\begin{lemma}
\label{rules12}
Let $M$ be a matrix on which branching rule 1 and branching rule 2 are not applicable.  Then the following are true:\\
(1) For each maximal clique $Q$ in $G(M)$, there exists at most two columns $c_p$ and $c_q$ in $M$ such that $Q \subseteq vert(c_p) \cup vert(c_q)$\\
(2) For each pair of columns $c_p$ and $c_q$ in $M$, $G[vert(c_p) \cup vert(c_q)]$ is chordal.  
\end{lemma}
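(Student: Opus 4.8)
The plan is to derive both parts as immediate consequences of the structural lemmas already in hand, since the hypotheses of the lemma are exactly that branching rules~1 and~2 are not applicable to $M$.

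For part~(1) I would just invoke Lemma~\ref{two-columns}. Because branching rule~1 does not apply to $M$, that lemma already asserts that for every maximal clique $Q$ of $G(M)$ there are at most two columns $c_p, c_q$ with $Q \subseteq vert(c_p) \cup vert(c_q)$; equivalently, a minimum set of columns whose vertex sets cover $Q$ has size at most $2$. So part~(1) needs no new argument and is really a restatement of Lemma~\ref{two-columns} in the present setting.

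For part~(2), fix an arbitrary pair of columns $c_p, c_q$ and write $G' = G[vert(c_p) \cup vert(c_q)]$. Recall that a graph is chordal precisely when it has no induced cycle of length at least $4$. First I would use Lemma~\ref{4-chord}: since branching rule~1 is not applicable, every induced cycle of $G'$ has length at most $4$ (this is forced because $vert(c_p)$ and $vert(c_q)$ each induce cliques, so an induced cycle can use at most two vertices from each). This eliminates all induced cycles of length $\geq 5$ at once, and it remains only to exclude induced $4$-cycles. But branching rule~2, as written in Step~3, is triggered exactly when some pair of columns $c_i, c_j$ yields an induced $4$-cycle in $G[vert(c_i) \cup vert(c_j)]$; since that rule is not applicable to $M$, no such $4$-cycle exists, and in particular $G'$ has none. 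Combining, $G'$ has no induced cycle of length $\geq 4$, hence $G'$ is chordal.

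There is no real technical obstacle here: the proof is a bookkeeping combination of Lemmas~\ref{two-columns} and~\ref{4-chord} with the defining trigger conditions of the two branching rules. The only point that needs care is matching the quantifiers --- making sure Lemma~\ref{4-chord}'s hypothesis (rule~1 inapplicable) together with the failure of rule~2 jointly rule out induced cycles of \emph{every} length $\geq 4$ --- and, if one wants to be thorough, noting that the covering pair referred to in part~(1) indeed always exists (cover $Q$ greedily by columns) so that the ``at most two'' statement is not vacuous.
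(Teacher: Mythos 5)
Your proof is correct and follows essentially the same route as the paper: the paper's own (very terse) proof is just the contrapositive appeal, with the surrounding text supplying exactly your argument --- part (1) is Lemma~\ref{two-columns} verbatim, and part (2) combines Lemma~\ref{4-chord} (induced cycles in $G[vert(c_p)\cup vert(c_q)]$ have length at most $4$, since each $vert(\cdot)$ is a clique) with the inapplicability of branching rule~2 to exclude induced $4$-cycles, yielding chordality. Your version is simply a more explicit write-up of the same reasoning.
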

\begin{proof}
If either of the two conditions are not true, then it would contradict the premise that the two branching rules are not applicable.  \qed
\end{proof}
{\bf Branching Rule 3: }After applying branching rule 2, $G[vert(c_p) \cup vert(c_q)]$ is chordal for each pair of columns $c_p,c_q$ in $M$. It is known that the maximal cliques of a chordal graph can be enumerated in linear time \cite{Gol04}. So, we enumerate the maximal cliques of $G[vert(c_i) \cup vert(c_j)]$ for each pair of columns $c_i$ and $c_j$ in $M$. From Corollary \ref{max-cli}, this enumeration is guaranteed to list all the maximal cliques of $G(M)$. For each maximal clique $Q$ in this enumeration,  if there is no column $c_k$ such that $vert(c_k) = Q$, then we identify two columns $c_p$ and $c_q$ such that $Q \subseteq vert(c_p) \cup vert(c_q)$, and apply branching rule 3.  The following lemma proves that Branching Rule 3 is necessary.
\begin{lemma}
\label{symm-diff}
Let $M$ be a matrix on which branching rule 1 is not applicable. Let $Q$ be a maximal clique in $G(M)$ such that there is no column $c_l$ such that $vert(c_l) = Q$. Let $Q'$ be a minimal subset of $Q$ that has no column $c_{l'}$ such that $Q' \subseteq vert(c_{l'})$. Let $v_1$,$v_2$ and $v_3$ be any three vertices in $Q'$, and let $r_1$,$r_2$, $r_3$ respectively be the corresponding rows. Then, any solution $\cal D$ of 
$d$-COS-R must include at least one of $r_1$, $r_2$ and $r_3$.
\end{lemma}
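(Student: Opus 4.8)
The plan is to establish the contrapositive by producing a small forbidden pattern for COP that no choice of deleted rows avoiding $r_1,r_2,r_3$ can destroy. Concretely, I would suppose that $\cal D$ is a solution with $r_1,r_2,r_3\notin{\cal D}$ and exhibit three columns $c_{a_1},c_{a_2},c_{a_3}$ of $M$ such that the submatrix of $M$ on rows $r_1,r_2,r_3$ and these three columns is $\left(\begin{smallmatrix}0&1&1\\1&0&1\\1&1&0\end{smallmatrix}\right)$, i.e.\ (up to a permutation of its rows) Tucker's forbidden configuration $M_{I_1}$, which has no COP \cite{AT72}. Since COP is hereditary under row and column deletions (deleting a row only removes a constraint, and deleting a column keeps the ones of every remaining row consecutive) and Algorithm COS-R never deletes columns, this $3\times3$ submatrix survives in $M\setminus{\cal D}$; hence $M\setminus{\cal D}$ cannot have COP, contradicting that $\cal D$ is a solution, and therefore every solution must contain one of $r_1,r_2,r_3$.

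To make this precise I would first check that $|Q'|\ge 3$, so that the three vertices make sense. Assuming (as one may after a trivial preprocessing step) that $M$ has no all-zero row, no singleton $\{v_i\}\subseteq Q$ can be uncovered, since $M_{ik}=1$ for some $k$ puts $v_i$ into $vert(c_k)$; and no pair $\{v_a,v_b\}\subseteq Q'\subseteq Q$ can be uncovered either, since $v_a$ and $v_b$ are adjacent in $G(M)$ and hence lie together in some $vert(c_k)$. Thus $|Q'|\ge 3$ (and $Q'$ exists because the hypothesis that no column equals $Q$ already makes the maximal clique $Q$ uncovered by any single column, the two-column structure of $Q$ from Lemma \ref{two-columns} being what the surrounding algorithm relies on). The key step is then the column extraction: by minimality of $Q'$, for each $i\in\{1,2,3\}$ the proper subset $Q'\setminus\{v_i\}$ lies in $vert(c_{a_i})$ for some column $c_{a_i}$ of $M$, and since no column contains all of $Q'$ we must have $v_i\notin vert(c_{a_i})$. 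Reading off entries, $M_{i,a_i}=0$ while $M_{j,a_i}=1$ for the other two indices $j$ (because $v_j\in Q'\setminus\{v_i\}\subseteq vert(c_{a_i})$); in particular $c_{a_1},c_{a_2},c_{a_3}$ are pairwise distinct and the submatrix on rows $\{r_1,r_2,r_3\}$ and columns $\{c_{a_1},c_{a_2},c_{a_3}\}$ is exactly $M_{I_1}$ as claimed.

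Combining the two steps finishes the proof in one line, as sketched above. The main obstacle is purely in the bookkeeping: one has to argue $|Q'|\ge 3$ cleanly (disposing of the degenerate all-zero-row case so that the invocation of the lemma in Step 4 is meaningful), and one has to use the minimality of $Q'$ in exactly the right way — a column covering each $Q'\setminus\{v_i\}$ but none covering $Q'$ — to be certain that the three extracted columns are distinct and genuinely realise the complement-of-identity pattern. Once these points are settled, the forbidden-submatrix observation together with the heredity of COP yields the contradiction.
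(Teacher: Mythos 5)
Your proof is correct and follows essentially the same route as the paper: extract, via the minimality of $Q'$, three distinct columns with $Q'\setminus\{v_i\}\subseteq vert(c_i)$ and $v_i\notin vert(c_i)$, and observe that rows $r_1,r_2,r_3$ together with these columns form the Tucker-forbidden complement-of-identity submatrix, which survives in $M\setminus{\cal D}$ and contradicts COP. Your extra care in verifying $|Q'|\ge 3$ and the pairwise distinctness of the columns is a welcome tightening of details the paper leaves implicit, but the argument is the same.
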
 
\begin{proof}
We prove this by contradiction. Suppose there exists a solution ${\cal D}$ that contains none of $r_1$, $r_2$ and $r_3$. Let $M'=M \setminus {\cal D}$ be the matrix with COP. Since there is no column $c_l$ in $M$ such that $Q' \subseteq vert(c_l)$ and $Q'$ is an inclusion minimal with this property, it follows that there exists distinct columns $c_1$, $c_2$ and $c_3$ such that $Q'\setminus \{v_i\} \subseteq vert(c_i)$ for each $i \in \{1,2,3\}$. Now, it follows that the rows $r_1$, $r_2$ and $r_3$ along with the colums $c_1, c_2, c_3$ form a submatrix of $M'$ which can be visualized as $\left( \begin{smallmatrix} \cdots & 0 & \cdots &1 & \cdots & 1 & \cdots\\ \cdots & 1 & \cdots &0 & \cdots & 1 & \cdots\\\cdots & 1 & \cdots &1 & \cdots & 0 & \cdots\end{smallmatrix} \right)$. This submatrix is forbidden for any matrix with COP \cite{AT72}. This is a contradiction to the fact that $M'$ has COP. Therefore, our assumption is wrong, and hence the lemma is proved. \qed
\end{proof}
\begin{lemma}
\label{rules123}
Let $M$ be a matrix for which branching rule 1, branching rule 2, and branching rule 3 are not applicable.  Then, for each maximal clique $Q$ in $G(M)$, there exists a column $c_p$ such that $Q = vert(c_p)$.  Further, $\oset{$\thicksim$}M$ is the clique matrix of $G(\oset{$\thicksim$}M)$.
\end{lemma}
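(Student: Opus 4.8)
The plan is to prove the two assertions in sequence, using the fact that all three branching rules are inapplicable to $M$. First I would establish the clique-identification claim: for each maximal clique $Q$ in $G(M)$, there is a column $c_p$ with $Q = vert(c_p)$. By Corollary~\ref{max-cli} (which holds because branching rule 1 is inapplicable), $Q$ is a maximal clique of $G[vert(c_i) \cup vert(c_j)]$ for some pair of columns $c_i, c_j$; in particular $Q \subseteq vert(c_i) \cup vert(c_j)$. Now suppose, for contradiction, that no column $c_l$ satisfies $vert(c_l) = Q$. Then the hypotheses of Lemma~\ref{symm-diff} are met, so there is a minimal subset $Q' \subseteq Q$ admitting no column $c_{l'}$ with $Q' \subseteq vert(c_{l'})$, and any three vertices $v_1,v_2,v_3 \in Q'$ give rows $r_1,r_2,r_3$ on which branching rule 3 would fire (note $Q'$ is well-defined and branching rule 3 applies, since $Q$ itself sits inside the union of two columns, so some proper subset is the minimal "bad" one, and it has size at least $3$). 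This contradicts the inapplicability of branching rule 3. Hence such a column $c_p$ with $vert(c_p) = Q$ exists.

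Next I would deduce that $\oset{$\thicksim$}M$ is the clique matrix of $G(\oset{$\thicksim$}M)$. The argument mirrors Corollary~\ref{IG-1}, but without assuming $M$ has COP. The key point is that $G(\oset{$\thicksim$}M)$ and $G(M)$ have essentially the same edge set on the vertex set $\{v_1,\dots,v_m\}$ corresponding to the rows of $M$, together with $n$ extra vertices $w_1,\dots,w_n$ coming from the identity block, where $w_k$ is adjacent exactly to the vertices $vert(c_k)$ (via the $k$-th column of $\oset{$\thicksim$}M$) and to no other $w_{k'}$. I would argue that the maximal cliques of $G(\oset{$\thicksim$}M)$ are precisely the sets $vert(c_k) \cup \{w_k\}$ for $k = 1,\dots,n$: each such set is a clique (since $vert(c_k)$ is a clique in $G(M)$, coming from column $c_k$, and $w_k$ is adjacent to all of it), and it is maximal because the only candidate for extending it is another $w_{k'}$, which is non-adjacent to $w_k$. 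Conversely, any maximal clique $K$ of $G(\oset{$\thicksim$}M)$ contains at most one $w_k$; if it contains $w_k$ then $K \subseteq vert(c_k) \cup \{w_k\}$ and by maximality $K = vert(c_k) \cup \{w_k\}$; if it contains no $w_k$, then $K$ is a clique of $G(M)$, hence contained in a maximal clique $Q$ of $G(M)$, which by the first part equals $vert(c_p)$ for some column $c_p$, so $K \cup \{w_p\}$ is a strictly larger clique, contradicting maximality of $K$. Thus the maximal cliques are in bijection with the columns of $\oset{$\thicksim$}M$, and the incidence of rows in columns matches membership of vertices in maximal cliques; that is exactly the statement that $\oset{$\thicksim$}M$ is the clique matrix of $G(\oset{$\thicksim$}M)$.

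The step I expect to require the most care is the converse direction of the maximal-clique classification for $G(\oset{$\thicksim$}M)$ — specifically ruling out a maximal clique that uses no distinguishing vertex $w_k$, which is exactly where the first assertion (every maximal clique of $G(M)$ is some $vert(c_p)$) is needed, and checking that the presence of the identity block does not create new edges among the $w_k$'s or spurious larger cliques. Everything else is bookkeeping about the padded matrix, analogous to Observation~\ref{COP} and Corollary~\ref{IG-1}. I would also remark that this lemma is what licenses Step~5 of the algorithm: once $\oset{$\thicksim$}M$ is a genuine clique matrix, Theorem~\ref{FG} ties COP of $\oset{$\thicksim$}M$ to $G(\oset{$\thicksim$}M)$ being an interval graph, so an interval-deletion solution on $G(\oset{$\thicksim$}M)$ translates directly into a row-deletion set making $\oset{$\thicksim$}M$ — and hence $M$, by Corollary~\ref{corr-cop} — have COP.
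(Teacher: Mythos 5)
Your proposal is correct and follows essentially the same route as the paper: a contradiction argument for the first claim (if no column realizes $Q$, then rule~3 would be applicable, using the two-column containment guaranteed by rule~1's inapplicability), and for the second claim the view of $G(\oset{$\thicksim$}M)$ as $G(M)$ augmented with one new vertex per column adjacent exactly to $vert(c_k)$. Your write-up is in fact slightly more complete than the paper's, since you explicitly verify the converse classification of the maximal cliques of $G(\oset{$\thicksim$}M)$ (no maximal clique avoids all the identity vertices $w_k$), a step the paper leaves implicit.
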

\begin{proof}
The proof of this lemma, too, is by contradiction. If $Q$ is a maximal clique such that there is no column $c_p$ such that $vert(c_p) = Q$, then
since branching rule 1 and branching rule 2 are not applicable, by Lemma \ref{rules12}, it follows that there exist columns $c_p$ and $c_q$ such that $Q \subseteq vert(c_p) \cup vert(c_q)$.  This implies that branching rule 3 is applicable for $M$, and this contradicts the premise of the lemma.  
Therefore, our assumption is wrong, and the first part of the lemma is proved.   To prove the second part of the lemma- Any column $c_k$ whose vertices $vert(c_k)$ is not a maximal clique in $G(M)$ becomes a maximal clique in $G(\oset{$\thicksim$}M)$. This is because $G(\oset{$\thicksim$}M)$ can be viewed as a graph obtained from $G(M)$ by adding a new vertex for each clique $vert(c_k), 1 \leq k \leq n$, and making this vertex adjacent to all the vertices in $vert(c_k)$.   Further, if $vert(c_k)$ is a maximal clique in $G(M)$, then in $G(\oset{$\thicksim$}M)$, $vert(c_k)$  is a maximal clique with one additional vertex.  This completes the proof that  $\oset{$\thicksim$}M$ is the clique matrix of $G(\oset{$\thicksim$}M)$.
Hence the lemma. \qed
\end{proof}
Now we show that, solving $d$-COS-R on  $\oset{$\thicksim$}M$ is equivalent to solving the Interval-Deletion problem on the graph $G(\oset{$\thicksim$}M)$.
\begin{theorem}
\label{interdel}
Let $\oset{$\thicksim$}M$ be the clique matrix of $G(\oset{$\thicksim$}M)$.
Given $\oset{$\thicksim$}M$ and integer $d \geq 0$, there exists a set of rows ${\cal D}$ such that $|{\cal D}| \leq d$ and 
$\oset{$\thicksim$}M \setminus {\cal D}$ has COP if and only if $G(\oset{$\thicksim$}M)$ has a set of vertices $V'$ such that $|V'| \leq d$ and $G(\oset{$\thicksim$}M) \setminus V'$ is an interval graph.
\end{theorem}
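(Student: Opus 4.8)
The plan is to prove the two directions of the equivalence separately, in both cases by translating between row sets of $\oset{$\thicksim$}M$ and vertex sets of $G(\oset{$\thicksim$}M)$ via the natural bijection $r_i \leftrightarrow v_i$. The key structural fact that makes this work is the hypothesis that $\oset{$\thicksim$}M$ is the clique matrix of $G(\oset{$\thicksim$}M)$; I must be careful because deleting rows (respectively vertices) changes which cliques are maximal, so I cannot simply invoke this hypothesis blindly on the deleted instances.

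For the forward direction, suppose $\cal D$ is a set of at most $d$ rows with $\oset{$\thicksim$}M \setminus {\cal D}$ having COP. Let $V' = \{v_i : r_i \in {\cal D}\}$. I would argue that $G(\oset{$\thicksim$}M \setminus {\cal D})$ is exactly the induced subgraph $G(\oset{$\thicksim$}M) \setminus V'$: the derived graph construction only ever reads off the $1$-entries of the columns restricted to the surviving rows, and removing a row removes precisely the corresponding vertex together with all edges it was contributing to. (A minor point: removing rows can leave a column all-zero, but an empty clique contributes no vertices and no edges, so the derived graph is unaffected.) Then by Lemma \ref{cop-ig} applied to $\oset{$\thicksim$}M \setminus {\cal D}$, since that matrix has COP, its derived graph $G(\oset{$\thicksim$}M) \setminus V'$ is an interval graph, and $|V'| = |{\cal D}| \le d$. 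This direction does not even need the clique-matrix hypothesis.

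For the reverse direction, suppose $G(\oset{$\thicksim$}M) \setminus V'$ is an interval graph with $|V'| \le d$, and let ${\cal D} = \{r_i : v_i \in V'\}$. Again $G(\oset{$\thicksim$}M \setminus {\cal D}) = G(\oset{$\thicksim$}M) \setminus V'$ is an interval graph. To conclude that $\oset{$\thicksim$}M \setminus {\cal D}$ has COP, by Theorem \ref{FG} (via Corollary \ref{corr-cop} and Observation \ref{COP} if needed) it suffices to show that $\oset{$\thicksim$}M \setminus {\cal D}$ is the clique matrix of this interval graph, i.e.\ that after deleting the rows of $\cal D$, every column still represents a maximal clique of the surviving graph, and every maximal clique of the surviving graph is represented by some column. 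Here I would use the structure of $\oset{$\thicksim$}M$: each original column $c_k$ has its private identity-row witness in the top $I$ block. If that witness row survives, the column $c_k$ still corresponds to a maximal clique in $G(\oset{$\thicksim$}M) \setminus V'$ (the private vertex forces maximality and distinctness). If the witness row is deleted, that column's support shrinks and may no longer be maximal — but then the identity columns corresponding to the deleted rows are themselves removed, and one checks that the remaining columns still cover all maximal cliques of the induced subgraph, because every maximal clique of an induced subgraph of a graph whose maximal cliques are all columns extends to (is contained in) a maximal clique of the whole graph, hence sits inside some $vert(c_k)$, hence equals $vert(c_k)$ restricted to surviving vertices for a suitable $k$.

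The main obstacle I anticipate is exactly this last bookkeeping: verifying that "clique matrix" is preserved under the simultaneous deletion of rows and their matching identity columns, i.e.\ that no maximal clique of $G(\oset{$\thicksim$}M) \setminus V'$ is "lost" (unrepresented) and no surviving column becomes non-maximal in a way that breaks the clique-matrix correspondence. I would handle this by arguing that restricting the clique-matrix hypothesis to induced subgraphs behaves well: a maximal clique $K$ of $G(\oset{$\thicksim$}M)\setminus V'$ is a clique of $G(\oset{$\thicksim$}M)$, so it lies in some maximal clique $Q = vert(c_k)$; then $Q \setminus V'$ is a clique of the subgraph containing $K$, so by maximality $K = Q \setminus V' = vert(c_k) \cap ({\cal R}(M)\setminus{\cal D})$, which is exactly the support of column $c_k$ in $\oset{$\thicksim$}M \setminus {\cal D}$. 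Duplicate columns (distinct columns with the same surviving support) only produce repeated identical columns, which do not affect the COP, so after this check the reverse direction closes and the theorem follows.
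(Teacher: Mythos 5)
Your forward direction is fine and is the same as the paper's: deleting a set of rows commutes with forming the derived graph, so Lemma \ref{cop-ig} immediately gives that $G(\oset{$\thicksim$}M)\setminus V'$ is interval whenever $\oset{$\thicksim$}M\setminus{\cal D}$ has COP. The genuine gap is in the reverse direction, at exactly the case you flag and then wave away. You fix ${\cal D}=\{r_i: v_i\in V'\}$ for the \emph{given} $V'$ and claim $\oset{$\thicksim$}M\setminus{\cal D}$ is the clique matrix of $G(\oset{$\thicksim$}M)\setminus V'$. When $V'$ contains an identity-row vertex this is false, and your patch, ``the identity columns corresponding to the deleted rows are themselves removed,'' is not a meaningful statement about $\oset{$\thicksim$}M=\left(\begin{smallmatrix} I\\ M\end{smallmatrix}\right)$: the identity block contributes rows, not columns, and deleting rows never deletes a column. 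The orphaned column survives with a shrunken support which is a clique but need not be a maximal clique, and then the matrix need not have COP at all. Concretely, let $M$ have rows $x,y,z$ and four columns with supports $\{x,y,z\},\{x,y\},\{y,z\},\{x,z\}$. Then $\oset{$\thicksim$}M$ \emph{is} the clique matrix of $G(\oset{$\thicksim$}M)$ (the four columns, each augmented by its private identity vertex $u_1,\dots,u_4$, are exactly the maximal cliques), and deleting the single identity vertex $u_2$ of the column $\{x,y\}$ makes the graph interval. But deleting the corresponding identity row leaves columns with supports $\{u_1,x,y,z\},\{x,y\},\{u_3,y,z\},\{u_4,x,z\}$, so rows $x,y,z$ lie in the column triples $\{1,2,4\},\{1,2,3\},\{1,3,4\}$; a $3$-subset of four columns is consecutive only if the omitted column sits at an end, and the three omitted columns are distinct, so no ordering works and $\oset{$\thicksim$}M\setminus{\cal D}$ has no COP. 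Hence the theorem cannot be proved with the ${\cal D}$ you chose; one must be allowed to output a deletion set other than the literal preimage of $V'$.

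The missing idea, which is where the paper does its (only) extra work, is to first arrange that the vertex deletion set contains no identity-row vertices: the paper replaces $V'$ by an inclusion-minimal deletion set and asserts that such a set avoids the identity vertices, after which every surviving column retains its private witness and is therefore a maximal clique of $G(\oset{$\thicksim$}M)\setminus V'$; combined with your (correct) argument that every maximal clique of the induced subgraph equals the surviving support of some column, this yields the clique-matrix property, and Theorem \ref{FG} gives COP. Be aware that this step needs a real argument: in the example above $\{u_2\}$ is itself inclusion-minimal, so minimality alone does not literally exclude identity vertices, and what actually has to be shown is that any deletion set of size at most $d$ can be traded for one of size at most $d$ that avoids the identity vertices (in the example, $\{x\}$ works). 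Your write-up neither states nor proves such a replacement, so as it stands the reverse direction does not go through.
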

\begin{proof}
Let ${\cal D}$ be a set of rows in $\oset{$\thicksim$}M$, and let $V'$ be the corresponding vertices in $G(\oset{$\thicksim$}M)$.   
From Lemma \ref{cop-ig} it follows that $\oset{$\thicksim$}M \setminus {\cal D}$ has COP implies $G(\oset{$\thicksim$}M \setminus {\cal D})$ is an 
interval graph. Further, $G(\oset{$\thicksim$}M \setminus {\cal D})$ is basically the graph obtained by removing $V'$ from $G(\oset{$\thicksim$}M)$.  This completes the forward direction of the claim.  In the reverse direction, let $V'$ be a minimal set of vertices such that $G(\oset{$\thicksim$}M) \setminus V'$ is an interval graph.  Due to the minimality of $V'$ observe that the vertices in $G(\oset{$\thicksim$}M)$ which correspond to
the rows of the identity matrix added to $M$ are not elements of $V'$.  Let ${\cal D}$ be the set of rows in $\oset{$\thicksim$}M$ corresponding to $V'$.   Note that the columns of $\oset{$\thicksim$}M \setminus {\cal D}$ are exactly the maximal cliques of $G(\oset{$\thicksim$}M) \setminus V'$.
Therefore, $\oset{$\thicksim$}M \setminus {\cal D}$ is the clique matrix of $G(\oset{$\thicksim$}M) \setminus V'$.  Since $G(\oset{$\thicksim$}M) \setminus V'$ is an interval graph, it follows from Theorem \ref{FG} that $\oset{$\thicksim$}M \setminus {\cal D}$ has COP.  Hence the theorem is proved. \qed
\end{proof}
We now show that, the recursive function $d$-COS-R correctly decides whether a given matrix $M$ has a set of at most $d$-rows whose removal results in a matrix with COP.
\begin{theorem}
\label{our-result}
Given an instance  $(M,d)$ of $d$-COS-R, the function call COS-R$(M,\emptyset,d)$ correctly decides in $O^*(10^d)$ time if there exists a set $\cal D$ of at most $d$ rows such that $M \setminus {\cal D}$ has COP. 
\end{theorem}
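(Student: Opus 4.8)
The plan is to prove correctness and running time separately, with correctness by induction on the parameter $d$ and the running time by bounding the recursion tree.

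For correctness, I would argue that \textbf{COS-R}$(M,\emptyset,d)$ returns a valid set $\mathcal{D}$ of size at most $d$ with $M\setminus\mathcal{D}$ having COP whenever one exists, and returns `NO' otherwise. First the soundness direction: if the algorithm returns a set $\mathcal{D}$, then it was either returned at Step 0 (so $M\setminus\mathcal{D}$ trivially has COP, using Corollary \ref{corr-cop} to pull back through the padding) or it was assembled at Step 6 from an Interval-Deletion solution $V'$ on $G(\oset{$\thicksim$}M)$. In the latter case, by Lemma \ref{rules123} we have reached a matrix $M$ for which none of the three branching rules applies, so $\oset{$\thicksim$}M$ is the clique matrix of $G(\oset{$\thicksim$}M)$; then Theorem \ref{interdel} converts the interval deletion set $V'$ back into a row set whose removal makes $\oset{$\thicksim$}M$, and hence $M$, have COP (Observation \ref{COP}, Corollary \ref{corr-cop}); finally, unioning with the rows already deleted along the branch gives a solution for the original instance. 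For completeness: suppose a solution of size at most $d$ exists for the current instance. If Step 0 applies we are done. Otherwise, whichever of the three branching rules fires, Lemma \ref{rule1}, Lemma \ref{corr-rule2}, or Lemma \ref{symm-diff} guarantees that every solution contains at least one of the three (resp. four) rows $r_1,r_2,r_3(,r_4)$ that the rule branches on; hence in the correct branch the residual instance $(M\setminus\{r_i\}, d-1)$ still has a solution, and by the induction hypothesis that recursive call succeeds. If no branching rule applies, then by Lemma \ref{rules123} we are in the clique-matrix situation and Theorem \ref{interdel} tells us that Interval-Deletion on $G(\oset{$\thicksim$}M)$ with budget $d$ finds a set $V'$ realizing the solution, so Step 5/6 returns correctly. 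The base case $d<0$ is handled by Step 1. I would also note explicitly that the branching rules are exhaustively checkable in polynomial time: Rule 1 by examining all triples of rows; Rule 2 because, by Lemma \ref{4-chord}, a non-chordal $G[vert(c_i)\cup vert(c_j)]$ must contain an induced $C_4$, which is found by inspecting all $4$-subsets of $vert(c_i)\cup vert(c_j)$; and Rule 3 because once Rule 2 fails every $G[vert(c_i)\cup vert(c_j)]$ is chordal, so all maximal cliques of $G(M)$ can be enumerated in polynomial time via Corollary \ref{max-cli}, and for each such clique one checks whether it equals some $vert(c_k)$.

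For the running time, each internal node of the recursion tree branches into at most four children and decreases $d$ by one, so the tree has depth at most $d$ and at most $4^d$ leaves, hence $O(4^d)$ nodes. At every node the work outside the recursive calls is polynomial in the input size (the branching-rule tests above, plus the auxiliary constructions of $\oset{$\thicksim$}M$ and $G(\oset{$\thicksim$}M)$), except that each leaf additionally invokes the Interval-Deletion algorithm of \cite{CM12}, which runs in $O^*(10^d)$ time. Multiplying, the total is $O^*(4^d)\cdot O^*(10^d)$ in the worst case; but I would observe that a leaf is reached only after the branch has already spent $k$ of the budget for some $k\le d$, so the Interval-Deletion call at that leaf uses budget $d-k$ and costs $O^*(10^{d-k})$, while there are at most $4^k$ such leaves; summing $\sum_{k=0}^{d} 4^k\cdot 10^{d-k} = 10^d\sum_{k=0}^d (4/10)^k = O(10^d)$ gives the claimed $O^*(10^d)$ overall bound. (Even the crude $4^d\cdot 10^d$ estimate would keep the algorithm FPT; the refined summation is what yields the sharp constant matching \cite{CM12}.)

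The main obstacle I anticipate is the completeness argument at the point where no branching rule applies: one must be certain that Lemma \ref{rules123} genuinely guarantees $\oset{$\thicksim$}M$ is a clique matrix, so that Theorem \ref{interdel}'s equivalence can be invoked with the \emph{same} budget $d$ — in particular that an optimal interval-deletion set never needs to touch the padded identity rows (the minimality observation inside the proof of Theorem \ref{interdel} handles this, and I would make sure to cite it). A secondary subtlety is verifying that the "invariant" comments in the pseudocode are actually maintained: Rule 1 not applying is a hypothesis of Lemmas \ref{two-columns}, \ref{4-chord}, and \ref{symm-diff}, and since Rule 1 is tested first and exhaustively before Rules 2 and 3 ever execute, this hypothesis does hold throughout, but it is worth stating that deletions performed by later rules cannot reintroduce a Rule-1 violation that would invalidate the analysis — and indeed they cannot, since deleting a row only removes sets from $\mathcal{S}(M)$ and taking submatrices preserves the non-applicability of Rule 1. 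With these points addressed, the proof is essentially an assembly of the already-established lemmas.
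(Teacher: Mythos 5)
Your proof is correct and takes essentially the same route as the paper: completeness via the branching lemmas (Lemmas \ref{rule1}, \ref{corr-rule2}, \ref{symm-diff}), the leaf-level reduction via Lemma \ref{rules123} and Theorem \ref{interdel}, and the running-time bound from at most $4^{d-d'}$ leaves each incurring an $O^*(10^{d'})$ Interval-Deletion call. Your explicit induction, the polynomial-time checkability of the rules, and the spelled-out summation $\sum_{k=0}^{d}4^{k}10^{d-k}=O(10^{d})$ are details the paper states only tersely, but they match its argument.
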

\begin{proof}
 Let ${\cal D}$ be a solution of size at most $d$.  From the Lemma
\ref{rule1}, Lemma \ref{corr-rule2}, and Lemma \ref{symm-diff}, in each recursive subproblem, one of the rows to be added in the solution is an element of ${\cal D}$.   Let $(M',d')$ be the instance of $d$-COS-R at a leaf node in the recursion, where this leaf node is one at which none of the first three branching rules apply, and each of recursive choices of rows to be added into the solution, in the computation starting at COS-R$(M,\emptyset,d)$ is selected from ${\cal D}$.   Let ${\cal D'} \subseteq {\cal D}$ be the set of rows that have been added to the solution in recursive calls upto the leaf node at which $(M',d')$ is an instance of $d$-COS-R, and let ${\cal D''} = {\cal D} \setminus {\cal D'}$.  From Corollary \ref{corr-cop}, 
 $M' \setminus {\cal D''}$ has COP if and only if $\oset{$\thicksim$}M' \setminus {\cal D''}$ has COP.  Further, from Lemma \ref{rules123}, $\oset{$\thicksim$}M'$ is the clique matrix of $G(\oset{$\thicksim$}M')$ at the leaf node in the recursion tree.  Therefore, it follows that $\oset{$\thicksim$}M' \setminus {\cal D''}$ has COP, and from Theorem \ref{interdel}, that $G(\oset{$\thicksim$}M') \setminus V'$ is an interval graph.   Therefore, from \cite{CM12}, it follows that Interval-Deletion$(G(\oset{$\thicksim$}M'),d')$ will return a set of at most $d'$ vertices whose removal from $G(\oset{$\thicksim$}M')$ guarantees that the resulting graph is an interval graph.  This proves that if there is a solution ${\cal D}$ to $(M,d)$, then COS-R$(M,\emptyset,d)$ will return a solution of size at most $d$.  It is also clear that if there is no solution ${\cal D}$ of size at most $d$, the algorithm will not find one. 
 
\noindent
In each of the recursive subproblems generated by branching rules 1, 2, and 3, the parameter reduces by at least 1.  Further, in each level of recursion,  at most four recursive calls are made, in branching rules 1, 2, and 3.  Therefore, in the recursion tree obtained by performing the 3 branching rules, there are at most $4^{d-d'}$ leaves at depth $d-d'$.    At a leaf node, in which the problem is $(M',d')$, Interval-Deletion returns an answer in at most $O^*(10^{d'})$ time.  Further,  the checks made at each level of recursion takes only polynomial time.  Therefore, this bounds the total running time of the algorithm by $O^*(10^d)$.    Hence the theorem. \qed
\end{proof}
\section{Concluding Remarks}
Using our algorithm for $d$-COS-R, we observe that the Convex Bipartite Deletion problem is FPT. Let $G=(V_1,V_2,E)$ be a bipartite graph with $V_1=\{x_1,\ldots,x_m\}$ and $V_2=\{y_1,\ldots,y_m\}$. Let $M$ be the half adjacency matrix of $G$. That is, $M_{ij}=1$ if and only if $\{x_i,y_j\} \in E$. $G$ is {\em convex} bipartite graph if and only if $M$ has COP \cite{AT72,MD08}. The Convex Bipartite Deletion problem is defined as follows.\\

\noindent \fbox{
\parbox{16cm}{
\indent \hspace{5cm} \textbf{Convex Bipartite Deletion} \\
\textbf{Input:} A bipartite graph $G=(V_1,V_2,E)$, $|V_1|=m$, $|V_2|=n$ and $d \ge 1$ \\
\textbf{Parameter:} $d$ \\
\textbf{Question:} Does there exist a set $D \subset V_1$ with $|D| \le d$ such that $G[V_1\setminus D,V_2]$ is a convex bipartite graph?
}
}\\ \\
\noindent This problem is known to be NP-complete from \cite{LY80}. However, from Theorem \ref{our-result}, the COS-R algorithm in Section \ref{our-algo} can be used to solve the problem in $O^*(10^d)$ time. Here, the inputs to the algorithm are the half adjacency matrix $M$ of $G$ and the parameter $d$. The algorithm returns a set ${\cal D}$ of at most $d$ rows (if one exists) such that $G[V_1\setminus D,V_2]$ is convex bipartite where $D$ is the subset of vertices of $V_1$ corresponding to ${\cal D}$.

\bibliography{refs}
\bibliographystyle{splncs03} 
\end{document}